\newcommand{\btheta}{\boldsymbol{\theta}}
\DeclareMathOperator*{\argmaxA}{arg\,max}
\theoremstyle{theorem}
\newtheorem{theorem}{Theorem}
\theoremstyle{prop}
\newtheorem{prop}{Proposition}
\theoremstyle{lem}
\newtheorem{lem}{Lemma}
\begin{document}

\title{Combining Biomarkers by Maximizing the True Positive Rate for a Fixed False Positive Rate}
\date{\vspace{-10ex}}

\maketitle

\begin{center}
{\normalsize
ALLISON MEISNER\\[2pt]
\textit{Department of Biostatistics, Johns Hopkins University, Baltimore, MD, USA}\\[2pt]
{ameisne1@jhu.edu}\\[4pt]
MARCO CARONE\\[2pt]
\textit{Department of Biostatistics, University of Washington, Seattle, WA, USA} \\[4pt]
MARGARET S. PEPE\\[2pt]
\textit{Public Health Sciences Division, Fred Hutchinson Cancer Research Center, Seattle, WA, USA} \\[4pt]
KATHLEEN F. KERR \\[2pt]
\textit{Department of Biostatistics, University of Washington, Seattle, WA, USA}}
\end{center}

\vspace{5ex}

\begin{abstract}
{Biomarkers abound in many areas of clinical research, and often investigators are interested in combining them for diagnosis, prognosis, or screening. In many applications, the true positive rate for a biomarker combination at a prespecified, clinically acceptable false positive rate is the most relevant measure of predictive capacity. We propose a distribution-free method for constructing biomarker combinations by maximizing the true positive rate while constraining the false positive rate. Theoretical results demonstrate desirable properties of biomarker combinations produced by the new method. In simulations, the biomarker combination provided by our method demonstrated improved operating characteristics in a variety of scenarios when compared with alternative methods for constructing biomarker combinations.}
{Biomarker combinations; False positive rate; Sensitivity; Specificity; True positive rate.}
\end{abstract}

\section{Introduction}

As the number of available biomarkers has grown, so has the interest in combining them for the purposes of diagnosis, prognosis, or screening. Over the past decade, much work has been done to develop methods for constructing biomarker combinations by targeting measures of performance, including those related to the receiver operating characteristic, or ROC, curve. This is in contrast to more traditional methods that construct biomarker combinations by optimizing general global fit criteria, such as the maximum likelihood approach. While methods to construct both linear and nonlinear combinations have been proposed, linear biomarker combinations are more common  than nonlinear combinations, due to their greater interpretability and ease of construction~\citep{wang2011,hsu2013}.

Although the area under the ROC curve, the AUC, is arguably the most popular way to summarize the ROC curve, there is often interest in identifying a biomarker combination with a high true positive rate (TPR), the proportion of correctly classified diseased individuals, while setting the false positive rate (FPR), the proportion of incorrectly classified nondiseased individuals, at some clinically acceptable level. A common practice among applied researchers is to construct linear biomarker combinations using logistic regression, and then calculate the TPR for the prespecified FPR, e.g.,~\cite{moore2008}. While methods for constructing biomarker combinations by maximizing the AUC or the partial AUC have been developed, these methods do not directly target the TPR for a specified FPR. 

We propose a distribution-free method for constructing linear biomarker combinations by maximizing the TPR while constraining the FPR. We demonstrate desirable theoretical properties of the resulting combination, and provide empirical evidence of good small-sample performance through simulations. To illustrate our method, we consider data from a prospective study of diabetes mellitus in 532 adult women with Pima Indian heritage~\citep{smithpima}. Several variables were measured for each participant, and criteria from the World Health Organization were used to identify women who developed diabetes. A primary goal of the study was to predict the onset of diabetes within five years. 

\section{Background}

\subsection{ROC Curve and Related Measures}

The ROC curve provides a means to evaluate the ability of a biomarker or, equivalently, biomarker combination $Z$ to identify individuals who have or will experience a binary outcome $D$. For example, in a diagnostic setting, $D$ denotes the presence or absence of disease and $Z$ may be used to identify individuals with the disease. The ROC curve provides information about how well the biomarker discriminates between individuals who have or will experience the outcome, that is, the cases, and individuals who do not have or will not experience the outcome, that is, the controls~\citep{pepebook}. Mathematically, if larger values of $Z$ are more indicative of having or experiencing the outcome, for each threshold $\delta$ we can define the TPR as $P(Z > \delta\mid D=1)$ and the FPR as $P(Z > \delta\mid D=0)$~\citep{pepebook}. For a given $\delta$, the TPR is also referred to as the sensitivity, and one minus the specificity equals the FPR~\citep{pepebook}. The ROC curve is a plot of the TPR versus the FPR as $\delta$ ranges over all possible values; as such, it is non-decreasing and takes values in the unit square~\citep{pepebook}. A perfect biomarker has an ROC curve that reaches the upper left corner of the unit square, and a useless biomarker has an ROC curve on the 45-degree line~\citep{pepebook}. 

The most common summary of the ROC curve is the AUC, the area under the ROC curve. The AUC ranges between 0.5 for a useless biomarker and 1 for a perfect biomarker~\citep{pepebook}. The AUC has a probabilistic interpretation: it is the probability that the biomarker value for a randomly chosen case is larger than that for a randomly chosen control, assuming that higher biomarker values are more indicative of having or experiencing the outcome~\citep{pepebook}. Both the ROC curve and the AUC are invariant to monotone increasing transformations of the biomarker $Z$~\citep{pepebook}.

The AUC summarizes the entire ROC curve, but in many situations it is more appropriate to only consider certain FPR values. For example, screening tests require a very low FPR, while diagnostic tests for fatal diseases may allow for a slightly higher FPR if the corresponding TPR is very high~\citep{hsu2013}. Such considerations led to the development of the partial AUC, the area under the ROC curve over some range $(t_0, t_1)$ of FPR values~\citep{pepebook}. Rather than considering a range of FPR values, there may be interest in fixing the FPR at a single value, determining the corresponding threshold $\delta$, and evaluating the TPR for that threshold. As opposed to the AUC and the partial AUC, this method returns a single classifier, or decision rule, which may appeal to researchers seeking a tool for clinical decision-making. 

\subsection{Biomarker Combinations} 

Many methods to combine biomarkers have been proposed, and they can be divided into two categories. The first includes indirect methods that seek to optimize a measure other than the performance measure of interest, while the second category includes direct methods that optimize the target performance measure. We focus on the latter.

Targeting the entire ROC curve (that is, constructing a combination that produces an ROC curve that dominates the ROC curve for all other linear combinations at all points) is very challenging and is only possible under special circumstances.~\cite{su1993} demonstrated that, when the vector $\textbf{X}$ of biomarkers has a multivariate normal distribution conditional on $D$ with proportional covariance matrices, it is possible to identify the linear combination that maximizes the TPR uniformly over the entire range of FPRs; this linear combination is Fisher's linear discriminant function.

~\cite{mcintosh2002} used the Neyman-Pearson lemma to demonstrate optimality (in terms of the ROC curve) of the likelihood ratio function and, consequently, of the risk score $P(D=1|\textbf{X}=\textbf{x})$ and monotone transformations of $P(D=1|\textbf{X}=\textbf{x})$. Thus, if the biomarkers are conditionally multivariate normal and the $D$-specific covariance matrices are equal, the optimal linear combination dominates not just every other linear combination, but also every nonlinear combination. This results from the fact that in this case, the linear logistic model $\mbox{logit}\lbrace P(D=1|\textbf{X}=\textbf{x})\rbrace = {\btheta}^{\top}\textbf{x}$ holds for some $p$-dimensional ${\btheta}$, where $p$ is the dimension of $\textbf{X}$. If the covariance matrices are proportional but not equal, the likelihood ratio is a nonlinear function of the biomarkers, as shown in the Appendix A for $p=2$, and the optimal biomarker combination with respect to the ROC curve is nonlinear. 

In general, there is no linear combination that dominates all others in terms of the TPR over the entire range of FPR values~\citep{su1993,anderson1962}. Thus, methods to optimize the AUC have been proposed. When the biomarkers are conditionally multivariate normal with nonproportional covariance matrices,~\cite{su1993} gave an explicit form for the best linear combination with respect to the AUC. Others have targeted the AUC without any assumption on the distribution of the biomarkers; many of these methods rely on smooth approximations to the empirical AUC, which involves indicator functions~\citep*{ma2007, fong2016, lin2011}.   

Acknowledging that often only a range of FPR values is of interest clinically, methods have been proposed to target the partial AUC for some FPR range $(t_0, t_1)$. Some methods make parametric assumptions about the joint distribution of the biomarkers~\citep{yu2015,hsu2013,yan2018} while others do not~\citep{wang2011,komori2010,yan2018}. The latter group of methods generally uses a smooth approximation to the partial AUC, similar to some of the methods that aim to maximize the AUC~\citep{wang2011,komori2010,yan2018}. One challenge faced in partial AUC maximization is that for narrow intervals, that is, when $t_0$ is close to $t_1$, the partial AUC is often very close to 0, which can make optimization difficult~\citep{hsu2013}. 


Some work in constructing biomarker combinations by maximizing the TPR has been done for conditionally multivariate normal biomarkers. In this setting, procedures for constructing a linear combination that maximizes the TPR for a fixed FPR~\citep{anderson1962,gao2008} as well as methods for constructing a linear combination by maximizing the TPR for a range of FPR values~\citep*{liu2005b} have been proposed. Importantly, in the method proposed by~\cite{liu2005b}, the range of FPR values over which the fitted combination is optimal may depend on the combination itself; that is, the range of FPR values may be determined by the combination and so may not be fixed in advance. Thus, this method does not optimize the TPR for a prespecified FPR.~\cite{baker2000} proposed a flexible nonparametric method for combining biomarkers by optimizing the ROC curve over a narrow target region of FPR values. However, this method is not well-suited to situations in which more than a few biomarkers are to be combined.  

An important benefit of constructing linear biomarker combinations by targeting the performance measure of interest is that the performance of the combination will be at least as good as the performance of the individual biomarkers~\citep*{pepe2006}. Indeed, several authors have recommended matching the objective function to the performance measure, i.e., constructing biomarker combinations by optimizing the relevant measure of performance~\citep{hwang2013,liu2005b,wang2011,ricamato2011}. To that end, we propose a distribution-free method to construct biomarker combinations by maximizing the TPR for a given FPR. 

Figure~\ref{fig1} illustrates the importance of targeting the measure of interest in constructing biomarker combinations. In this example, combinations of three biomarkers are constructed by (i) maximizing the logistic likelihood, (ii) maximizing the AUC via the \texttt{optAUC} package in \texttt{R} (i.e., the method of~\cite{huang2011}), and (iii) maximizing the TPR for an FPR of 20\% using the proposed method. The ROC curves for the three combinations differ markedly near the prespecified FPR of 20\%. In particular, the TPRs at an FPR of 20\% for the three combinations are 18.0\%, 24.0\%, and 34.0\% for maximum likelihood, AUC optimization, and maximization of the TPR for a given FPR, respectively. This example highlights the utility of methods that target the TPR for a specific FPR as opposed to methods that target other measures.

\begin{figure}[ht]
\begin{center}
\includegraphics[width=3.8in]{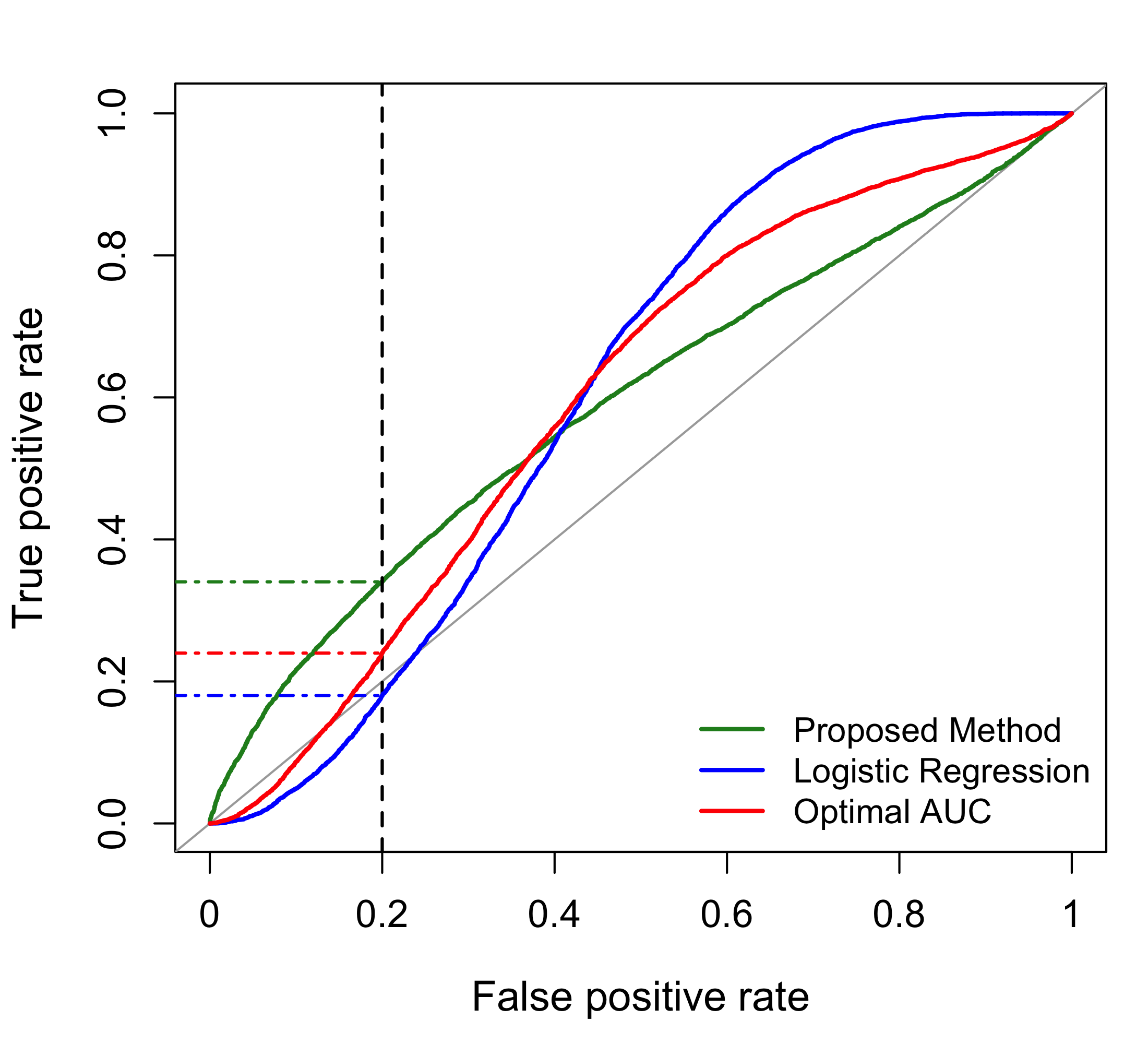}
\caption{Biomarker combinations obtained by targeting different measures. In this illustrative example, there is interest in the TPR for an FPR of 20\%. Combinations of three biomarkers were constructed in training data (400 cases and 400 controls) and evaluated in a large test dataset (10,000 cases and 10,000 controls). Two of the biomarkers, $X_1$ and $X_2$, were distributed as conditional bivariate lognormal random variables such that, among controls, $E(\mbox{log}(X_1)) = 1.1, E(\mbox{log}(X_2)) = 1.1, Var(\mbox{log}(X_1)) = 0.04, Var(\mbox{log}(X_2)) = 0.5, Cov(\mbox{log}(X_1),\mbox{log}(X_2)) = 0.09$ and, among cases, $E(\mbox{log}(X_1)) = 1, E(\mbox{log}(X_2)) = 1, Var(\mbox{log}(X_1)) = 0.05, Var(\mbox{log}(X_2)) = 0.05, Cov(\mbox{log}(X_1),\mbox{log}(X_2)) = 0.015$. The third biomarker, $X_3$, was distributed as an independent lognormal random variable such that, among both cases and controls, $E(\mbox{log}(X_3)) = 1.65, Var(\mbox{log}(X_3)) = 4.66$. The combinations were constructed by maximizing the TPR for an FPR of 20\% (``Optimal TPR''; green), by maximizing the logistic likelihood (``Logistic Regression''; blue), and by maximizing the AUC (``Optimal AUC''; red). The gray line indicates the ROC curve for a useless marker (FPR and TPR are equal), the black dashed line denotes an FPR of 20\%, and the dot-dashed lines indicate the TPRs for an FPR of 20\%. This figure appears in color in the electronic version of this article.}
\label{fig1}
\end{center}
\end{figure}

\section{Methodology}

\subsection{Description}

Cases are denoted by the subscript $1$ and controls are denoted by the subscript $0$. Let $\textbf{X}_{1i}$ denote the vector of biomarkers for the $i^{th}$ case, and let $\textbf{X}_{0j}$ denote the vector of biomarkers for the $j^{th}$ control. 

We propose constructing a linear biomarker combination of the form ${\btheta}^{\top}\textbf{X}$ for a $p$-dimensional $\textbf{X}$ by maximizing the TPR when the FPR is below some prespecified, clinically acceptable value $t$. We define the true and false positive rates for a given $\textbf{X}$ as a function of ${\btheta}$ and $\delta$:
\[ \mbox{TPR}({\btheta}, \delta) = P({\btheta}^{\top}\textbf{X} > \delta|D=1),\;\;\;\; \mbox{FPR}({\btheta}, \delta) = P({\btheta}^{\top}\textbf{X} > \delta|D=0).\]
Since the true and false positive rates for a given combination ${\btheta}$ and threshold $\delta$ are invariant to scaling of the parameters $({\btheta}, \delta)$, we must restrict $({\btheta}, \delta)$ to ensure identifiability. Specifically, we constrain $||\boldsymbol{\theta}|| = 1$ as in \cite{fong2016}. For any fixed $t \in (0,1)$, we can consider
\begin{equation*}
(\btheta_t, \delta_t) \in \argmaxA_{({\btheta}, \delta) \in \Omega_t} \mbox{TPR}({\btheta}, \delta),
\end{equation*}
where $\Omega_t = \lbrace {\btheta} \in \mathbb{R}^p, \delta \in \mathbb{R}: ||{\btheta}||=1, \mbox{FPR}({\btheta},\delta) \leq t \rbrace.$ This provides the optimal combination $\btheta_t$ and threshold $\delta_t$. We define $(\btheta_t, \delta_t)$ to be an element of $\argmaxA_{({\btheta}, \delta) \in \Omega_t} \mbox{TPR}({\btheta}, \delta)$, where $\argmaxA_{({\btheta}, \delta) \in \Omega_t} \mbox{TPR}({\btheta}, \delta)$ may be a set. 

Of course, in practice, the true and false positive rates are unknown, so $\btheta_t$ and $\delta_t$ cannot be computed. We can replace these unknowns by their empirical estimates,
\begin{equation*}
\hat{\mbox{TPR}}_{n_1}({\btheta}, \delta) = \frac{1}{n_1} \sum_{i=1}^{n_1} 1({\btheta}^{\top}\textbf{X}_{1i} > \delta),\;\; \hat{\mbox{FPR}}_{n_0}({\btheta}, \delta) = \frac{1}{n_0} \sum_{j=1}^{n_0} 1({\btheta}^{\top}\textbf{X}_{0j} > \delta),
\end{equation*}
where $n_1$ is the number of cases and $n_0$ is the number of controls, giving the total sample size $n = n_1 + n_0$. We can then define
\begin{equation*}
(\hat{{\btheta}}_{t}, \hat{\delta}_{t}) \in \argmaxA_{(\btheta, \delta) \in \hat{\Omega}_{t,n_0}} \hat{\mbox{TPR}}_{n_1}({\btheta}, \delta) 
\end{equation*}
where $\hat{\Omega}_{t,n_0} = \lbrace {\btheta} \in \mathbb{R}^p, \delta \in \mathbb{R}: ||{\btheta}||=1, \hat{\mbox{FPR}}_{n_0}({\btheta},\delta) \leq t \rbrace.$ It is possible to conduct a grid search over $({\btheta}, \delta)$ to perform this constrained optimization, though this becomes computationally demanding when combining more than two or three biomarkers.  

Since the objective function involves indicator functions, it is not a smooth function of the parameters $({\btheta}, \delta)$ and thus not amenable to derivative-based methods. However, smooth approximations to indicator functions  have been used for AUC maximization~\citep{ma2007, fong2016, lin2011}. One such smooth approximation is $1(w > 0) \approx \Phi(w/h)$, where $\Phi$ is the standard normal distribution function, and $h$ is a tuning parameter representing the trade-off between approximation accuracy and estimation feasibility such that $h$ tends to zero as the sample size grows~\citep{lin2011}. We can use this smooth approximation to implement the method described above, writing the smooth approximations to the empirical true and false positive rates as
\begin{equation*}
\tilde{\mbox{TPR}}_{n_1}({\btheta}, \delta) = \frac{1}{n_1} \sum_{i=1}^{n_1} \Phi\left(\frac{{\btheta}^{\top}\textbf{X}_{1i}-\delta}{h}\right), \;\;\tilde{\mbox{FPR}}_{n_0}({\btheta}, \delta) = \frac{1}{n_0} \sum_{j=1}^{n_0} \Phi\left(\frac{{\btheta}^{\top}\textbf{X}_{0j}-\delta}{h}\right).
\end{equation*}

Thus, we propose to compute
\begin{equation}
(\tilde{\btheta}_t, \tilde{\delta}_t) \in \argmaxA_{({\btheta}, \delta) \in \tilde{\Omega}_{t,n_0}} \tilde{\mbox{TPR}}_{n_1}({\btheta}, \delta), \label{defeqn}
\end{equation}
where $\tilde{\Omega}_{t,n_0} = \lbrace {\btheta} \in \mathbb{R}^p, \delta \in \mathbb{R}: ||{\btheta}|| = 1, \tilde{\mbox{FPR}}_{n_0}({\btheta},\delta) \leq t \rbrace.$ Since both $\tilde{\mbox{TPR}}_{n_1}$ and $ \tilde{\mbox{FPR}}_{n_0}$ are smooth functions, we can use gradient-based methods that incorporate the necessary constraints, e.g., Lagrange multipliers. In particular, $(\tilde{\btheta}_t, \tilde{\delta}_t)$ can be obtained using existing software for constrained optimization of smooth functions, such as the \texttt{Rsolnp} package in \texttt{R}. An \texttt{R} package including code for our method based on \texttt{Rsolnp}, \texttt{maxTPR}, is available on CRAN. Other details related to implementation, including the choice of tuning parameter $h$, are discussed below. 

\subsection{Asymptotic Properties} 

We present a theorem establishing that, under certain conditions, the combination obtained by optimizing the smooth approximation to the empirical TPR while constraining the smooth approximation to the empirical FPR has desirable operating characteristics. In particular, its FPR is bounded almost surely by the acceptable level $t$ in large samples. In addition, its TPR converges almost surely to the supremum of the TPR over the set where the FPR is constrained. We focus on the operating characteristics of $(\tilde{\btheta}_t,\tilde{\delta}_t)$ since these are of primary interest to clinicians.  

Rather than enforcing $(\tilde{\btheta}_t,\tilde{\delta}_t)$ to be a strict maximizer, in the theoretical study below we allow it to be a near-maximizer of $\tilde{\mbox{TPR}}_{n_1}(\btheta,\delta)$ within $\tilde{\Omega}_{t,n_0}$ in the sense that \[\tilde{\mbox{TPR}}_{n_1}(\tilde{{\btheta}}_{t},\tilde{\delta}_{t})\ \geq\ \sup_{({\btheta}, \delta) \in \tilde{\Omega}_{t,n_0}} \tilde{\mbox{TPR}}_{n_1}({\btheta}, \delta) - a_n\ ,\] where $a_n$ is a decreasing sequence of positive real numbers tending to zero. This provides some flexibility to accommodate situations in which a strict maximizer either does not exist or is numerically difficult to identify. 

Before stating our key theorem, we give the following conditions. 

\begin{enumerate}
\item[(1)] Observations are randomly sampled conditional on disease status $D$, and the group sizes tend to infinity proportionally, in the sense that $n = n_1 + n_0 \rightarrow \infty$ and $n_1/n_0 \rightarrow \rho \in (0,1)$. 
\item[(2)] For each $d\in\{0,1\}$, observations $\textbf{X}_{di}$, $i=1,2,\ldots,n_d$, are independent and identically distributed $p$-dimensional random vectors with distribution function $F_d$. 
\item[(3)] For each $d\in\{0,1\}$, no proper linear subspace $S\subset\mathbb{R}^p$ is such that $P(\textbf{X} \in S \mid D=d) = 1.$ 
\item[(4)] For each $d\in\{0,1\}$, the distribution and quantile functions of ${\btheta}^{\top}\textbf{X}$ given $D=d$ are globally Lipschitz continuous uniformly over ${\btheta}\in \mathbb{R}^p$ such that $\|{\btheta}\|=1$. 
\item[(5)] The map $(\btheta, \delta) \mapsto \mbox{TPR}({\btheta}, \delta)$ is globally Lipschitz continuous over $\Omega = \lbrace {\btheta} \in \mathbb{R}^p, \delta \in \mathbb{R}: ||{\btheta}|| = 1 \rbrace$. 
\end{enumerate}

\begin{theorem}
\label{theorem1}
Under conditions (1)--(5), for every fixed $t \in (0,1)$, we have that 
\begin{itemize}
\item[(a)] $\limsup_n\mbox{FPR}(\tilde{{\btheta}}_{t},\tilde{\delta}_{t}) \leq t$ almost surely; and 
\item[(b)] $| \mbox{TPR}(\tilde{{\btheta}}_{t}, \tilde{\delta}_{t}) - \sup_{({\btheta}, \delta) \in \Omega_t} \mbox{TPR}({\btheta}, \delta)|$ tends to zero almost surely.
\end{itemize}
\end{theorem}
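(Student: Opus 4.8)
The plan is to reduce Theorem~\ref{theorem1} to two pillars. \textbf{First}, a uniform strong law: both smooth empirical operating characteristics converge almost surely to their population counterparts, uniformly over $\Omega$, i.e.\ $\sup_{(\btheta,\delta)\in\Omega}|\tilde{\mbox{FPR}}_{n_0}(\btheta,\delta)-\mbox{FPR}(\btheta,\delta)|\to 0$ and the analogous statement for the TPR. \textbf{Second}, continuity on $(0,1)$ of the value function $V(s):=\sup_{(\btheta,\delta)\in\Omega_s}\mbox{TPR}(\btheta,\delta)$, which is nondecreasing by construction. Given these, both claims follow from a standard constrained $M$-estimation argument in which the near-maximizer slack $a_n$ and the uniform errors are absorbed into $o(1)$ terms.

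For the first pillar I would split $\tilde{\mbox{FPR}}_{n_0}-\mbox{FPR}$ into a smoothing bias and a stochastic fluctuation. Writing $G_{\btheta,d}$ for the conditional distribution function of $\btheta^{\top}\textbf{X}$ given $D=d$ and letting $Z$ be a standard normal variate, the representation $E\{\Phi((\btheta^{\top}\textbf{X}-\delta)/h)\mid D=0\}=E\{1-G_{\btheta,0}(\delta+hZ)\}$ together with the uniform Lipschitz continuity of $G_{\btheta,0}$ from condition~(4) bounds the bias by $Ch\,E|Z|$, uniformly in $(\btheta,\delta)$; since $h=h_n\to 0$ this vanishes. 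For the fluctuation I would show that the class $\{x\mapsto\Phi((\btheta^{\top}x-\delta)/h):\|\btheta\|=1,\delta\in\mathbb{R},h>0\}$ is VC-subgraph: the subgraph inequality $s<\Phi((\btheta^{\top}x-\delta)/h)$ is equivalent, for $s\in(0,1)$, to the linear inequality $\btheta^{\top}x-\delta-h\Phi^{-1}(s)>0$, so each subgraph is the preimage of a half-space in $\mathbb{R}^{p+1}$ under a fixed map. With a bounded envelope and conditions~(1)--(2), the uniform strong law of large numbers applies uniformly over $h$, hence at $h=h_n$. The same reasoning handles the TPR.

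Part~(a) is then immediate: $(\tilde{\btheta}_t,\tilde{\delta}_t)\in\tilde{\Omega}_{t,n_0}$ gives $\tilde{\mbox{FPR}}_{n_0}(\tilde{\btheta}_t,\tilde{\delta}_t)\le t$, so $\mbox{FPR}(\tilde{\btheta}_t,\tilde{\delta}_t)\le t+\varepsilon_n$ with $\varepsilon_n:=\sup_{\Omega}|\mbox{FPR}-\tilde{\mbox{FPR}}_{n_0}|\to 0$ almost surely, and taking $\limsup_n$ yields the claim. For the second pillar I would exploit condition~(4) more fully: the Lipschitz continuity of the \emph{quantile} functions forbids flat pieces and gaps in the support of each projection, so a change of magnitude $\gamma$ in the FPR can be realized by a threshold shift of magnitude at most $L_Q\gamma$, and condition~(5) then bounds the corresponding change in TPR by $L\,L_Q\gamma$. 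Applying this to a near-optimal pair at level $t$ (shifting the threshold up to strictly reduce the FPR) gives left-continuity of $V$, and applying it to a near-optimal pair at level $t+\gamma$ (shifting the threshold up to bring the FPR down to $t$) gives right-continuity; condition~(3) guarantees that every projection $\btheta^{\top}\textbf{X}$ is nondegenerate, so that condition~(4) is meaningful.

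Finally, for part~(b) set $M:=V(t)$. For the upper bound, $\mbox{FPR}(\tilde{\btheta}_t,\tilde{\delta}_t)\le t+\varepsilon_n$ places the estimator in $\Omega_{t+\varepsilon_n}$, so $\mbox{TPR}(\tilde{\btheta}_t,\tilde{\delta}_t)\le V(t+\varepsilon_n)\to V(t)=M$ by right-continuity. For the lower bound, fix $\eta>0$; left-continuity yields a strictly feasible pair $(\btheta^{\sharp},\delta^{\sharp})$ with $\mbox{FPR}(\btheta^{\sharp},\delta^{\sharp})<t$ and $\mbox{TPR}(\btheta^{\sharp},\delta^{\sharp})\ge M-\eta$. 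Once $n$ is large enough that $\varepsilon_n$ falls below the feasibility margin, $\tilde{\mbox{FPR}}_{n_0}(\btheta^{\sharp},\delta^{\sharp})\le t$, so $(\btheta^{\sharp},\delta^{\sharp})\in\tilde{\Omega}_{t,n_0}$; near-maximality and the uniform TPR convergence then give $\mbox{TPR}(\tilde{\btheta}_t,\tilde{\delta}_t)\ge M-\eta-o(1)$, and letting $\eta\downarrow 0$ completes the bound. I expect the main obstacle to be the second pillar---verifying continuity of $V$ and constructing strictly feasible near-optimal pairs---since this is where the somewhat unusual quantile-Lipschitz part of condition~(4) and the nondegeneracy condition~(3) must be combined carefully; by comparison the VC and smoothing-bias arguments underlying the first pillar are essentially routine.
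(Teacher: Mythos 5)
Your proposal is correct, and while your first pillar and your part (a) coincide with the paper's own argument (its Lemma 2 establishes the same uniform almost-sure convergence via a VC/Glivenko--Cantelli bound for the stochastic fluctuation plus an $O(h)$ smoothing-bias bound from the Lipschitz distribution functions, and part (a) then follows exactly as you write), your part (b) takes a genuinely different route. The paper proves a general lemma (its Lemma 1) about near-maximizers of a bounded Lipschitz function over random constraint sets: it verifies that the one-sided Hausdorff discrepancies $d_n$ and $e_n$ between $\tilde{\Omega}_{t,n_0}$ and $\Omega_t$ vanish almost surely---using the Lipschitz quantile functions of condition (4) to move a point of one feasible set into the other by a small threshold shift---and then lets the Lipschitz continuity of the TPR (condition (5)) transfer set convergence into convergence of suprema. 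You instead make the value function $V(s)=\sup_{(\btheta,\delta)\in\Omega_s}\mbox{TPR}(\btheta,\delta)$ the central object, prove its one-sided continuity at $t$ by exactly the same threshold-shifting device, and finish with a sandwich: the estimator lies in $\Omega_{t+\varepsilon_n}$, giving the upper bound via right-continuity, while any strictly feasible near-optimal pair eventually belongs to $\tilde{\Omega}_{t,n_0}$, giving the lower bound via near-maximality and left-continuity. The ingredients are identical (conditions (4), (5), and the uniform convergence), but your organization is more elementary: it dispenses with the set-convergence machinery, requires only a single strictly feasible point rather than the full two-sided approximation (the $e_n$ half of which the paper only sketches with ``similar arguments''), and your Fubini-type representation $E\lbrace\Phi((\btheta^{\top}\textbf{X}-\delta)/h)\mid D=0\rbrace=E\lbrace 1-G_{\btheta,0}(\delta+hZ)\rbrace$ handles the smoothing bias more cleanly and more generally than the paper's density-based change of variables, which is carried out explicitly only for $p=2$ and implicitly assumes densities exist. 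What the paper's route buys in exchange is a reusable, self-contained lemma on optimization over converging random sets, which controls the values at arbitrary near-maximizers directly rather than only the suprema.
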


The proof of Theorem~\ref{theorem1} is given in Appendix B. The proof relies on two lemmas, also in Appendix B. Lemma 1 demonstrates almost sure convergence to zero of the difference between the supremum of a function over a fixed set and the supremum of the function over a stochastic set that converges to the fixed set in an appropriate sense. Lemma 2 establishes the almost sure uniform convergence to zero of the difference between the FPR and the smooth approximation to the empirical FPR and the difference between the TPR and the smooth approximation to the empirical TPR. The proof of Theorem~\ref{theorem1} demonstrates that Lemma 1 holds for the relevant function and sets, relying in part on the conclusions of Lemma 2. The conclusions of Lemmas 1 and 2 then demonstrate the claims of Theorem~\ref{theorem1}.

\subsection{Implementation Details}

Certain considerations must be addressed to implement the proposed method, including the choice of tuning parameter $h$ and starting values $(\tilde{{\btheta}}, \tilde{\delta})$ for the optimization routine. In using similar methods to maximize the AUC,~\cite{lin2011} proposed using $h = \tilde{\sigma} n^{-1/3}$, where $\tilde{\sigma}$ is the sample standard error of $\tilde{{\btheta}}^{\top}\textbf{X}$. In simulations, we considered both $h = \tilde{\sigma} n^{-1/3}$ and $h = \tilde{\sigma} n^{-1/2}$ and found similar behavior for the convergence of the optimization routine. Thus, we use $h = \tilde{\sigma} n^{-1/2}$. We must also identify initial values $(\tilde{{\btheta}}, \tilde{\delta})$ for our procedure. As done in~\cite{fong2016}, we use normalized estimates from robust logistic regression, which is described in greater detail below. Based on this initial value $\tilde{{\btheta}}$, we choose $\tilde{\delta}$ such that $\tilde{\mbox{FPR}}_{n_0}(\tilde{{\btheta}}, \tilde{\delta}) = t$. In addition, we have also found that when $\tilde{\mbox{FPR}}_{n_0}$ is bounded by $t$, the performance of the optimization routine can be poor. Thus, we introduce another tuning parameter, $\alpha$, which allows for a small amount of relaxation in the constraint on the smooth approximation to the empirical FPR, imposing instead $\tilde{\mbox{FPR}}_{n_0}({\btheta},\delta) \leq t + \alpha.$ Since the effective sample size for the smooth approximation to the empirical FPR is $n_0$, we chose to scale $\alpha$ with $n_0$, and have found $\alpha = 1/(2n_0)$ to work well.

Our method involves computing the gradient of the smooth approximations to the true and false positive rates defined above, which is fast regardless of the number of biomarkers involved. This is in contrast with methods that rely on brute force (e.g., grid search), which typically become computationally infeasible for combinations of more than two or three biomarkers. However, we note that for any method, the risk of overfitting is expected to grow as the number of biomarkers increases relative to the sample size. We emphasize that our method does not impose constraints on the distribution of the biomarkers that can be included, except for weak conditions that allow us to establish its large-sample properties. 

\section{Simulations} 

\subsection{Bivariate Normal Biomarkers with Contamination}

First, we considered bivariate normal biomarkers with contamination, similar to a scenario described by~\cite{croux2003}. In particular, we considered a setting where two biomarkers $(X_1, X_2)$ were independently normally distributed with mean zero and variance one. $D$ was then defined as $D = 1(2X_1 + 2X_2 + \zeta > 0),$ where $\zeta$ was distributed as a logistic random variable with location parameter zero and scale parameter one. Next, the sample was contaminated by a set of points with $D = 0, X_1 = 6,$ and $X_2 = 6$. We consider simulations where the training set consisted of 800 or 1600 ``typical'' observations and 50 or 100, respectively, contaminating observations (this yielded a disease prevalence of approximately 47\%). The test set consisted of $10^6$ ``typical'' observations and 62,500 contaminating observations. The maximum acceptable FPR, $t$, was 0.2 or 0.3. We performed 1000 simulations. 

We considered five approaches: (1) logistic regression, (2) the robust logistic regression method proposed by~\cite{bianco1996}, (3) grid search, (4) the method proposed by~\cite{su1993} and (5) the proposed method. As discussed above, the method proposed by~\cite{su1993} yields a combination with maximum AUC when the biomarkers have a conditionally multivariate normal distribution. We did not consider the optimal AUC method proposed by~\cite{huang2011} as the implementation provided in \texttt{R} is too slow for use in simulations (and, as illustrated in Figure~\ref{fig1}, may not yield a combination with optimal TPR). While the methods recently proposed by~\cite{yan2018} to optimize the partial AUC are compelling and may yield a combination with high TPR at the specified FPR value, implementation of their method, particularly the nonparametric kernel-based method, is non-trivial, and so is not included here. Finally, the method of~\cite{liu2005b}, discussed above, may also yield a combination with high TPR at a particular FPR. However, given the shortcomings of this method described above (namely, that the range of FPRs over which the combination is optimal cannot be fixed in advanced and the biomarkers are assumed to have a conditionally multivariate normal distribution), we do not include this as a comparison method. Above all, none of these methods specifically target the TPR for a specified FPR which, as indicated by Figure~\ref{fig1}, may lead to combinations with reduced TPR at the specified FPR. 

We focused on evaluating the operating characteristics of the fitted combination rather than the biomarker coefficients as the former is typically of primary interest. In particular, we evaluated the TPR in the test data for FPR = $t$ in the test data. In other words, for each combination, the threshold used to calculate the TPR in the test data was chosen such that the FPR in the test data was equal to $t$. Evaluating the TPR in this way puts the combinations on equal footing in terms of the FPR, and so allows a fair comparison of the TPR. We evaluated the FPR of the fitted combinations in the test data using the thresholds estimated in the training data, i.e., the $(1-t)$th quantile of the fitted biomarker combination among controls in the training data. While we could have used the estimate of $\delta_t$ provided by our method in the evaluation, we found improved performance (that is, better control of the FPR) when re-estimating the threshold based on the fitted combination in the training data.

Table~\ref{contam} summarizes the results. For both sample sizes and FPR thresholds, all methods adequately controlled the FPR, while for the TPR, the proposed method outperformed logistic regression, robust logistic regression, and the method of~\cite{su1993}. Furthermore, the results from the proposed method were comparable to those from the grid search, which may be regarded as a performance reference but is infeasible for more than two or three biomarkers. 

\setlength{\tabcolsep}{3pt}

\begin{table}
\def~{\hphantom{0}}
\small
\caption{Mean TPR and FPR and corresponding standard deviation (in parentheses) in the test data across 1000 simulations for contaminated data with two biomarkers. The TPR is based on the threshold corresponding to an FPR of $t$ in the test data whereas the FPR is based on the thresholds estimated in the training data. $n$, size of the training dataset; $t$, acceptable FPR; TPR, true positive rate; FPR, false positive rate; GLM, standard logistic regression; rGLM, robust logistic regression; sTPR, proposed method. All numbers are percentages. }{
\begin{center}
\begin{tabular}{cc@{\hskip 0.2in}ccccc}
\\ \hline
$n$ & Measure & \multicolumn{5}{c}{Method} \\ \cline{3-7} \\ [-8pt]
& & GLM & rGLM & Grid Search & Su \& Liu & sTPR \\ \hline 
  \multicolumn{7}{c}{$t=$ 0.20} \\
800 & TPR & 52.7 (15.9) & 55.8 (15.1) & 72.5 (0.9) & 53.6 (15.6) & 72.0 (4.5) \\
  & FPR & 20.4 (1.4) & 20.4 (1.4) & 20.6 (1.3) & 20.4 (1.4) & 20.4 (1.3) \\ [1.5pt]
   1600 & TPR & 57.5 (13.2) & 60.0 (12.1) & 72.7 (0.5) & 58.3 (12.8) & 72.8 (0.5) \\
   & FPR & 20.3 (1.0) & 20.3 (1.0) & 20.4 (1.0) & 20.3 (1.0) & 20.3 (1.0) \\ \hline
  \multicolumn{7}{c}{$t=$ 0.30} \\
  800 & TPR & 68.5 (14.3) & 71.4 (13.6) & 86.0 (0.5) & 69.3 (13.9) & 86.0 (1.2) \\
  & FPR & 30.6 (1.8) & 30.6 (1.8) & 30.6 (1.8) & 30.6 (1.8) & 30.4 (1.8) \\ [1.5pt]
   1600 & TPR & 74.0 (11.5) & 76.1 (10.3) & 86.1 (0.3) & 74.7 (11.1) & 86.1 (0.3) \\
   & FPR & 30.4 (1.3) & 30.3 (1.3) & 30.4 (1.3) & 30.4 (1.3) & 30.2 (1.3) \\ \hline
\end{tabular}
\end{center}}
\label{contam}
\end{table}

\subsection{Conditionally Multivariate Lognormal Biomarkers}

We also considered simulations with conditionally multivariate lognormal biomarkers~\citep{mishrathesis}. In particular, we considered three biomarkers $(X_1,X_2,X_3)$. Among controls, $\mbox{log}(X_1)$ and $\mbox{log}(X_2)$ had a bivariate normal distribution with $E(\mbox{log}(X_1)) = 1.1, E(\mbox{log}(X_2)) = 1.1, Var(\mbox{log}(X_1)) = 0.04, Var(\mbox{log}(X_2)) = 0.5$ and $Cov(\mbox{log}(X_1),\mbox{log}(X_2)) = 0.09$. Among cases, $\mbox{log}(X_1)$ and $\mbox{log}(X_2)$ had a bivariate normal distribution with $E(\mbox{log}(X_1)) = 1, E(\mbox{log}(X_2)) = 1, Var(\mbox{log}(X_1)) = 0.05, Var(\mbox{log}(X_2)) = 0.05$ and $Cov(\mbox{log}(X_1),\mbox{log}(X_2)) = 0.015$. The third biomarker, $X_3$ was simulated from an independent lognormal distribution with $E(\mbox{log}(X_3)) = 1.65$ and $Var(\mbox{log}(X_3)) = 4.66$ among both cases and controls. Given the performance of the method of~\cite{su1993} and the performance of the proposed method relative to grid search observed above (and the computational challenges of implementing grid search for three biomarkers), we considered three methods here: (1) logistic regression, (2) robust logistic regression, and (3) the proposed method. Although neither logistic regression nor robust logistic regression performed particularly well in the simulations in Section 4.1, these methods represent the most commonly used approach for constructing biomarker combinations and the method used to provide starting values for the proposed method, respectively. Thus, it was important to include them here. 

The maximum acceptable FPR, $t$, was 0.2 and 1000 simulations were performed. The training data consisted of either 400 cases and 400 controls, or 800 cases and 800 controls. The test data consisted of $10^6$ observations. The TPR and FPR were evaluated as described above. We present the results in Table~\ref{anuidea}. All three methods did well in controlling the FPR at the specified value. Furthermore, the proposed method substantially outperformed logistic regression and robust logistic regression: the mean TPR based on the proposed method was at least 20\% larger than the mean TPRs from logistic regression and robust logistic regression. 

\begin{table}
\def~{\hphantom{0}}
\small
\caption{Mean TPR and FPR and corresponding standard deviation (in parentheses) in the test data across 1000 simulations for three conditionally lognormal biomarkers and $t=0.30$. The TPR is based on the threshold corresponding to an FPR of $t$ in the test data whereas the FPR is based on the thresholds estimated in the training data. $n$, size of the training dataset; $t$, acceptable FPR; TPR, true positive rate; FPR, false positive rate; GLM, standard logistic regression; rGLM, robust logistic regression; sTPR, proposed method. All numbers are percentages. }{
\begin{center}
\begin{tabular}{cc@{\hskip 0.2in}ccc}
\\ \hline
$n$ & Measure & \multicolumn{3}{c}{Method} \\ \cline{3-5} \\ [-8pt]
& & GLM & rGLM & sTPR \\ \hline 
800 & TPR & 34.1 (6.0) & 34.1 (6.0) & 41.5 (5.7) \\
  & FPR & 30.3 (2.3) & 30.3 (2.3) & 31.2 (2.4) \\ [1.5pt]
   1600 & TPR & 34.7 (4.2) & 34.7 (4.2) & 41.9 (4.9) \\
   & FPR & 30.2 (1.6) & 30.2 (1.6) & 30.7 (1.7) \\ \hline
\end{tabular}
\end{center}}
\label{anuidea}
\end{table}

\subsection{Bivariate Normal Biomarkers and Bivariate Normal Mixture Biomarkers}

The above simulations demonstrate superiority of our approach relative to alternative methods in particular scenarios. We conducted further simulations to demonstrate the feasibility of our approach in other settings (for instance, small sample size, small and large prevalence, and low FPR cutoffs) relative to logistic regression and robust logistic regression.  

We considered simulations with and without outliers in the data-generating distribution, and simulated data under a model similar to that used by~\cite{fong2016}. We considered two biomarkers $X_1$ and $X_2$ constructed as 
\begin{equation*}
\left( \begin{array}{c}
X_1 \\
X_2 \end{array} \right) = (1 - \Delta)\times Z_0 + \Delta \times Z_1,
\end{equation*}
where $\Delta$ was a Bernoulli random variable with success probability $\pi = 0.05$ when outliers were simulated and $\pi = 0$ otherwise, and $Z_0$ and $Z_1$ were independent bivariate normal random variables with mean zero and respective covariance matrices
\begin{equation*}
0.2\times \left(\begin{array}{cc}
1 & 0.9 \\
0.9 & 1 \end{array} \right), \,\, 2 \times \left(\begin{array}{cc}
1 & 0 \\
0 & 1 \end{array} \right).
\end{equation*}
$D$ was then simulated as a Bernoulli random variable with success probability 
$f\left\lbrace \beta_0 + 4 X_1 - 3 X_2 \right.$ $\left.- 0.8(X_1 - X_2)^3 \right\rbrace$. We considered two $f$ functions: $f_1(v) = \mbox{expit}(v) = e^v/(1+e^v)$ and a piecewise logistic function,
\begin{equation*}
f_2(v) = 1(v < 0) \times \frac{1}{1+e^{-v/3}} + 1(v \geq 0) \times \frac{1}{1+e^{-3v}}. 
\end{equation*} 
We varied $\beta_0$ to reflect varying prevalences, with a prevalence of approximately 50--60\% for $\beta_0 =$ 0, 16--18\% for $\beta_0 <$ 0, and 77--82\% for $\beta_0 >$ 0. We considered $t =$ 0.05, 0.1, and 0.2. A plot illustrating the data-generating distribution with $f = f_1$ and $\beta_0 =$ 0, with and without outliers, is given in Appendix D. 

The training data consisted of 200, 400, or 800 observations while the test set included $10^6$ observations. The TPR and FPR were evaluated as described above. The results are presented in Appendix C. When no outliers were present, the proposed method was comparable to logistic regression and robust logistic regression in terms of both the TPR and FPR. In the presence of outliers, robust logistic regression tended to provide combinations with higher TPRs than did logistic regression, and the TPRs of the combinations provided by the proposed method tended to be comparable to or somewhat better than the results from robust logistic regression. In all scenarios, all three methods controlled the FPR, particularly as sample size increased. In addition to demonstrating feasibility of our approach, these simulations highlight the fact that logistic regression is relatively robust to violations of the linear-logistic model (e.g., nonlinear biomarker combinations and deviations from the logit link). 

\subsection{Convergence}

In most simulation settings, convergence of the proposed method was achieved in more than $96\%$ of simulations. For some of the more extreme outlier scenarios considered in Section 4.3, convergence failed in up to $7.3\%$ of simulations. 

\section{Application to Diabetes Data}

We applied the method we have developed to a study of diabetes in women with Pima Indian heritage~\citep{smithpima}. We considered seven predictors measured in this study: number of pregnancies, plasma glucose concentration, diastolic blood pressure, triceps skin fold thickness, body mass index, age, and diabetes pedigree function (a measure of family history of diabetes~\citep{smithpima}). We used 332 observations as training data and reserved the remaining 200 observations for testing. The training and test datasets had 109 and 68 diabetes cases, respectively. We scaled the variables to have equal variance. The distribution of predictors is depicted in Appendix E. The combinations were fitted using the training data and evaluated using the test data. We fixed the acceptable FPR at $t = 0.10$. We used logistic regression, robust logistic regression, and the proposed method to construct the combinations, giving the results in Table \ref{PIDrslts}, where the fitted combinations from logistic regression and robust logistic regression have been normalized to aid in comparison. 

\begin{table}
\def~{\hphantom{0}}
\small
\caption{Fitted combinations of the scaled predictors in the diabetes study. GLM, standard logistic regression; rGLM, robust logistic regression; sTPR, proposed method with $t=$ 0.10.}{
\begin{center}
\begin{tabular}{l@{\hskip 0.35in}r@{\hskip 0.27in}r@{\hskip 0.27in}r}
\\ \hline
Predictor & GLM & rGLM & sTPR  \\ \hline 
Number of pregnancies & 0.321 & 0.320 & 0.403\\
Plasma glucose & 0.793 & 0.792 & 0.627 \\
Blood pressure & $-$0.077  & $-$0.073 & $-$0.026 \\
Skin fold thickness & 0.089 & 0.090 & $-$0.146\\
Body mass index & 0.399 & 0.400 & 0.609 \\
Diabetes pedigree & 0.280 & 0.281 & 0.191 \\
Age & 0.133 & 0.134 & 0.123  \\ \hline
\end{tabular}
\end{center}}
\label{PIDrslts}
\end{table}

Using thresholds based on an FPR of 10\% in the test data, the estimated TPR in the test data was 54.4\% for both logistic regression and robust logistic regression, and 55.9\% for the proposed method. The estimated FPR in the test data using thresholds corresponding to an FPR of 10\% in the training data was 18.2\% for both logistic regression and robust logistic regression and 26.5\% for the proposed method. The fact that these FPRs exceeded the target value for all three methods indicates potentially important differences in the controls between the training and test data.

\section{Discussion}

We have proposed a distribution-free method for constructing linear biomarker combinations by maximizing a smooth approximation to the TPR while constraining a smooth approximation to the FPR. Ours is the first distribution-free approach targeting the TPR for a specified FPR that can be used with more than two or three biomarkers. While we do not expect our method to outperform every other approach in every dataset, we have demonstrated broad feasibility of our method and, importantly, we have identified scenarios where the performance of our method is superior to alternative approaches. 


The proposed method could be adapted to minimize the FPR while controlling the TPR to be above some acceptable level. Since the TPR and FPR condition on disease status, the proposed method can be used with case-control data. For case-control data matched on a covariate, however, it becomes necessary to consider the covariate-adjusted ROC curve and corresponding covariate-adjusted summaries, and thus the methods presented here are not immediately applicable~\citep{janes2008}.

As our smooth approximation function is non-convex, the choice of starting values should be considered further. Extensions of convex methods, such as the ramp function method proposed by~\cite{fong2016} for the AUC, could also be considered. The idea of partitioning the search space, proposed by~\cite{yan2018}, may also be useful. Further research could investigate methods for evaluating the true and false positive rates of biomarker combinations after estimation, for example, sample-splitting, bootstrapping, or $k$-fold cross-validation.

\section{Software}

An \texttt{R} package containing code to implement the proposed method, \texttt{maxTPR}, is publicly available via CRAN. 

\section*{Funding} 

This work was supported by the National Institutes of Health [F31 DK108356, R01 CA152089, and R01 HL085757]; and the University of Washington Department of Biostatistics Career Development Fund [to M.C.]. The opinions, results, and conclusions reported in this article are those of the authors and are independent of the funding sources.

\clearpage

\section*{Appendix A}

\begin{prop}
\label{prop1}
If the biomarkers $(X_1,X_2)$ are conditionally multivariate normal with proportional covariance matrices given $D$, that is, 
\begin{equation*}
(X_1,X_2\mid D=0) \sim N({\mu}_0, \Sigma),\;\; (X_1,X_2\mid D=1) \sim N({\mu}_1, \sigma^2\Sigma),
\end{equation*}
then the optimal biomarker combination in the sense of the ROC curve is of the form
\begin{equation*}
\beta_0 + \beta_1 X_1 + \beta_2 X_2 + \beta_3 X_1X_2 + \beta_4 X_1^2 + \beta_5 X_2^2
\end{equation*}
for some vector $(\beta_0, \beta_1, \beta_2, \beta_3, \beta_4, \beta_5) \in \mathbb{R}^5$.
\end{prop}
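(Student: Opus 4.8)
The plan is to leverage the optimality result already invoked in the body of the paper: by the Neyman--Pearson lemma, as used by \cite{mcintosh2002}, the likelihood ratio $f_1(\mathbf{x})/f_0(\mathbf{x})$ — where $f_d$ denotes the density of $(X_1,X_2)$ given $D=d$ — yields the ROC-optimal combination, and the same is true of any monotone increasing transformation of it, since the ROC curve is invariant to such transformations. It therefore suffices to exhibit a monotone increasing transformation of the likelihood ratio that has the stated quadratic form. I would take the transformation to be the logarithm, so that the object to analyze becomes $\log\{f_1(\mathbf{x})/f_0(\mathbf{x})\} = \log f_1(\mathbf{x}) - \log f_0(\mathbf{x})$.

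First I would write out the two bivariate normal log-densities with $\mathbf{x}=(X_1,X_2)^{\top}$. Using $(\sigma^2\Sigma)^{-1}=\sigma^{-2}\Sigma^{-1}$ and $|\sigma^2\Sigma|=\sigma^{4}|\Sigma|$, the constants $\log(2\pi)$ cancel and the determinant terms collapse to $-\log\sigma^{2}$, so the log-likelihood ratio equals
\[
C \;-\; \tfrac{1}{2\sigma^{2}}(\mathbf{x}-\mu_1)^{\top}\Sigma^{-1}(\mathbf{x}-\mu_1)\;+\;\tfrac{1}{2}(\mathbf{x}-\mu_0)^{\top}\Sigma^{-1}(\mathbf{x}-\mu_0),
\]
where $C=-\log\sigma^{2}$ collects every term not depending on $\mathbf{x}$. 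The next step is the routine expansion $(\mathbf{x}-\mu)^{\top}\Sigma^{-1}(\mathbf{x}-\mu) = \mathbf{x}^{\top}\Sigma^{-1}\mathbf{x} - 2\mu^{\top}\Sigma^{-1}\mathbf{x} + \mu^{\top}\Sigma^{-1}\mu$ applied to both quadratic forms, after which I would group terms by their dependence on $\mathbf{x}$.

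The purely quadratic contribution is $\bigl(\tfrac{1}{2}-\tfrac{1}{2\sigma^{2}}\bigr)\,\mathbf{x}^{\top}\Sigma^{-1}\mathbf{x}$. Writing $\Sigma^{-1}=\left(\begin{smallmatrix} a & b\\ b & c\end{smallmatrix}\right)$, this expands to $\bigl(\tfrac{1}{2}-\tfrac{1}{2\sigma^{2}}\bigr)(aX_1^{2}+2bX_1X_2+cX_2^{2})$, which supplies precisely the $X_1^{2}$, $X_2^{2}$, and $X_1X_2$ terms and hence the coefficients $\beta_4,\beta_5,\beta_3$. The linear contribution $(\sigma^{-2}\mu_1-\mu_0)^{\top}\Sigma^{-1}\mathbf{x}$ supplies $\beta_1X_1+\beta_2X_2$, and the remaining constants (including $C$ and the two $\mu_d^{\top}\Sigma^{-1}\mu_d$ terms) are absorbed into $\beta_0$. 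Reading off the six coefficients then completes the argument.

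There is no substantive obstacle here; the content is bookkeeping of a $2\times2$ quadratic form. The one point worth flagging explicitly is why the quadratic terms genuinely survive: the scalar $\tfrac{1}{2}-\tfrac{1}{2\sigma^{2}}$ is nonzero exactly when $\sigma^{2}\neq1$, i.e., when the covariance matrices are proportional but unequal. When $\sigma^{2}=1$ this coefficient vanishes, $\mathbf{x}^{\top}\Sigma^{-1}\mathbf{x}$ drops out, and the optimal combination reduces to a linear one — consistent with the equal-covariance case discussed in the main text. I would therefore state the result as describing the general unequal-covariance regime, so that the genuinely quadratic form in the statement is correctly interpreted.
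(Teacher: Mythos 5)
Your proposal is correct and follows essentially the same route as the paper: invoke the likelihood-ratio optimality result of~\cite{mcintosh2002}, apply a monotone increasing transformation (you take the logarithm explicitly, while the paper expands the exponent inside the exponential), and expand the two quadratic forms to read off the coefficients. The only differences are cosmetic --- you retain both means rather than setting $\mu_0 = 0$ without loss of generality, and your closing remark about the $\sigma^2 = 1$ case is a sensible observation but not needed, since a linear combination is still of the stated form with $\beta_3 = \beta_4 = \beta_5 = 0$.
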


\begin{proof}
It is known that the optimal combination of $(X_1,X_2)$ in terms of the ROC curve is the likelihood ratio, $f(X_1,X_2\mid D=1)/f(X_1,X_2\mid D=0)$, or any monotone increasing function thereof~\citep{mcintosh2002}. Let $\textbf{M} = (X_1,X_2)$. Without loss of generality, let ${\mu}_0 = 0$ and ${\mu}_1 = {\mu} = (\mu_{X_1}, \mu_{X_2})$. Then
\begin{align*}
\frac{f(M\mid D=1)}{f(M\mid D=0)} & = \frac{|\sigma^2 \Sigma|^{-1/2}\mbox{exp}\left\lbrace -\frac{1}{2}(M - {\mu})^{\top}(\sigma^2\Sigma)^{-1}(M - {\mu}) \right\rbrace}{|\Sigma|^{-1/2}\mbox{exp}\left\lbrace -\frac{1}{2}M^{\top}\Sigma^{-1}M \right\rbrace} \\
& = \frac{\mbox{exp}\left\lbrace -\frac{1}{2}(M - {\mu})^{\top}(\sigma^2\Sigma)^{-1}(M - {\mu}) \right\rbrace}{\sigma^2 \mbox{exp}\left\lbrace -\frac{1}{2}M^{\top}\Sigma^{-1}M \right\rbrace} \\
& = \frac{1}{\sigma^2} \mbox{exp}\left\lbrace -\frac{(M - {\mu})^{\top}\Sigma^{-1}(M - {\mu})}{2\sigma^2} + \frac{M^{\top}\Sigma^{-1}M}{2}\right\rbrace.
\end{align*}

Denote the entries of $\Sigma^{-1}$ by 
\begin{equation*}
\Sigma^{-1} = \left( \begin{array}{cc}
S_{11} & S_{12} \\
S_{21} & S_{22}
\end{array} \right).
\end{equation*}
Then, we can write that
\begin{equation*}
\begin{aligned}
&-\frac{1}{2\sigma^2}(M - {\mu})^{\top}\Sigma^{-1}(M - {\mu}) + \frac{1}{2}M^{\top}\Sigma^{-1}M \\
&\hspace{.5in}=\ \frac{1}{2}\left[\frac{1}{\sigma^2} \left\lbrace -S_{11}(X_1^2 - 2X_1\mu_{X_1} + \mu_{X_1}^2) - S_{21}(X_1X_2 - X_2\mu_{X_1} - X_1\mu_{X_2} + \mu_{X_1}\mu_{X_2}) \right. \right. \\
&\hspace{.5in}\hspace{.3in} \left. \left. - S_{12}(X_1X_2 - X_1\mu_{X_2} - X_2\mu_{X_1} + \mu_{X_1}\mu_{X_2}) - S_{22}(X_2^2 - 2X_2\mu_{X_2} + \mu_{X_2}^2)\right\rbrace \right. \\
&\hspace{.5in}\hspace{.3in} + S_{11}X_1^2 + S_{21}X_1X_2 + S_{12}X_1X_2 + S_{22}X_2^2 \bigg] \\
&\hspace{.5in}=\ \frac{1}{2}\left\lbrace \left( S_{11} - \frac{S_{11}}{\sigma^2} \right) X_1^2 + \left( S_{22} - \frac{S_{22}}{\sigma^2} \right) X_2^2 + \left( S_{12} + S_{21} - \frac{S_{12}}{\sigma^2} - \frac{S_{21}}{\sigma^2} \right) X_1X_2 \right. \\
&\hspace{.5in}\hspace{.3in} \left. + \left( \frac{2S_{11}\mu_{X_1} + S_{21}\mu_{X_2} + S_{12}\mu_{X_2}}{\sigma^2} \right)X_1 + \left( \frac{S_{21}\mu_{X_1} + S_{12}\mu_{X_1} + 2S_{22}\mu_{X_2}}{\sigma^2} \right)X_2 \right. \\
&\hspace{.5in}\hspace{.3in} \left. + \frac{-S_{11}\mu_{X_1}^2 - S_{21}\mu_{X_1}\mu_{X_2} - S_{12}\mu_{X_1}\mu_{X_2} - S_{22}\mu_{X_2}^2}{\sigma^2}\right\rbrace \\
&\hspace{.5in}=\ \beta_0 + \beta_1 X_1 + \beta_2 X_2 + \beta_3 X_1X_2 + \beta_4 X_1^2 + \beta_5 X_2^2
\end{aligned}
\end{equation*}
as claimed, where
\begin{align*}
\beta_0 &\ =\ \frac{-S_{11}\mu_{X_1}^2 - S_{21}\mu_{X_1}\mu_{X_2} - S_{12}\mu_{X_1}\mu_{X_2} - S_{22}\mu_{X_2}^2}{\sigma^2} \\
\beta_1 &\ =\ \left( \frac{2S_{11}\mu_{X_1} + S_{21}\mu_{X_2} + S_{12}\mu_{X_2}}{\sigma^2} \right) \\
\beta_2 &\ =\ \left( \frac{S_{21}\mu_{X_1} + S_{12}\mu_{X_1} + 2S_{22}\mu_{X_2}}{\sigma^2} \right) \\
\beta_3 &\ =\ \left( S_{12} + S_{21} - \frac{S_{12}}{\sigma^2} - \frac{S_{21}}{\sigma^2} \right) \\
\beta_4 &\ =\ \left( S_{11} - \frac{S_{11}}{\sigma^2} \right) \\
\beta_5 &\ =\ \left( S_{22} - \frac{S_{22}}{\sigma^2} \right).
\end{align*}

\end{proof}

\clearpage

\section*{Appendix B}

The proof of Theorem 1 relies on Lemmas 1 and 2, which are stated and proved below.

\begin{lem}
Say that a bounded function $f: \mathbb{R}^d \rightarrow \mathbb{R}$ and possibly random sets $\Omega_0,\Omega_1,\Omega_2,\ldots \subseteq \mathbb{R}^d$ are given, and let $\{a_n\}_{n\geq 1}$ be a decreasing sequence of positive real numbers tending to zero. For each $n\geq 1$, suppose that $\omega_{0,n}\in\Omega_0$ and $\omega_{n}\in\Omega_n$ are near-maximizers of $f$ over $\Omega_0$ and $\Omega_n$, respectively, in the sense that $f(\omega_{0,n})\geq \sup_{\omega\in\Omega_0}f(\omega)-a_n$ and $f(\omega_{n})\geq \sup_{\omega\in\Omega_n}f(\omega)-a_n$.
Further, define
\[ d_n = \adjustlimits \sup_{\omega \in \Omega_n} \inf_{\tilde{\omega} \in \Omega_0} d(\omega, \tilde{\omega}),\;\; e_n = \adjustlimits \sup_{\omega \in \Omega_0} \inf_{\tilde{\omega} \in \Omega_n} d(\omega, \tilde{\omega}), \] 
where $d$ is the Euclidean distance in $\mathbb{R}^d$. If $d_n$ and $e_n$ tend to zero almost surely, and $f$ is globally Lipschitz continuous, then $|f(\omega_{0,n}) - f(\omega_n)|$ tends to zero almost surely. In particular, this implies that \[\bigg|\sup_{\omega\in\Omega_0}f(\omega)-\sup_{\omega\in\Omega_n}f(\omega)\bigg|\longrightarrow 0\] almost surely.
\end{lem}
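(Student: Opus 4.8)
The plan is to control the gap between the two suprema using the global Lipschitz property of $f$ together with the one-sided Hausdorff-type quantities $d_n$ and $e_n$, and then to transfer this control to the near-maximizers by absorbing the slack $a_n$. Let $L$ denote a Lipschitz constant for $f$. First I would establish the two one-sided supremum bounds. Fix $\omega\in\Omega_n$; since $\inf_{\tilde{\omega}\in\Omega_0}d(\omega,\tilde{\omega})\leq d_n$, for every $\epsilon>0$ there is $\tilde{\omega}\in\Omega_0$ with $d(\omega,\tilde{\omega})\leq d_n+\epsilon$, whence $f(\omega)\leq f(\tilde{\omega})+L(d_n+\epsilon)\leq \sup_{\Omega_0}f+L(d_n+\epsilon)$. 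Taking the supremum over $\omega\in\Omega_n$ and letting $\epsilon\downarrow 0$ gives $\sup_{\Omega_n}f\leq \sup_{\Omega_0}f+Ld_n$. The symmetric argument, with the roles of $\Omega_0$ and $\Omega_n$ interchanged and $e_n$ in place of $d_n$, gives $\sup_{\Omega_0}f\leq \sup_{\Omega_n}f+Le_n$. Combining the two yields
\[ \Big|\sup_{\Omega_0}f-\sup_{\Omega_n}f\Big|\ \leq\ L\max(d_n,e_n). \]
Because $f$ is bounded, both suprema are finite, so these manipulations are legitimate; note also that the existence of the near-maximizers forces each $\Omega_n$ (and $\Omega_0$) to be nonempty.

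Next I would pass from the suprema to the near-maximizers. By their defining property, $0\leq \sup_{\Omega_0}f-f(\omega_{0,n})\leq a_n$ and $0\leq \sup_{\Omega_n}f-f(\omega_n)\leq a_n$, so inserting the two suprema and applying the triangle inequality gives
\[ |f(\omega_{0,n})-f(\omega_n)|\ \leq\ a_n+\Big|\sup_{\Omega_0}f-\sup_{\Omega_n}f\Big|+a_n\ \leq\ L\max(d_n,e_n)+2a_n. \]
Working on the probability-one event on which both $d_n\to 0$ and $e_n\to 0$, and using that $a_n\to 0$ holds deterministically, the right-hand side tends to zero; this establishes $|f(\omega_{0,n})-f(\omega_n)|\to 0$ almost surely. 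The ``in particular'' claim is then immediate, since on the same event the displayed supremum bound already gives $|\sup_{\Omega_0}f-\sup_{\Omega_n}f|\leq L\max(d_n,e_n)\to 0$.

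As for difficulty, I do not expect a genuine obstacle here: the argument is an elementary Lipschitz estimate and the two conclusions are essentially two readings of the same bound. The only points requiring a little care are that the infima defining $d_n$ and $e_n$ need not be attained, so one must extract $\epsilon$-approximate nearest points rather than exact minimizers (hence the $\epsilon\downarrow 0$ passage), and that the convergence is almost sure, so the final estimate should be interpreted on the full-probability event where $d_n,e_n\to 0$ while keeping in mind that $a_n\to 0$ is deterministic.
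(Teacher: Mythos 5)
Your proof is correct, but it takes a genuinely different route from the paper's. The paper argues by contradiction: it supposes $\limsup_n |f(\omega_n) - f(\omega_{0,n})| > \epsilon$ with positive probability, then on the almost-sure event where $d_n, e_n \to 0$ it extracts, for all large $n$, points $\omega_n^* \in \Omega_0$ and $\omega_{0,n}^* \in \Omega_n$ within $\epsilon/(2K)$ of $\omega_n$ and $\omega_{0,n}$ respectively, and plays the near-maximizer property ($f(\omega_{0,n}^*) \leq f(\omega_n) + a_n$ and $f(\omega_n^*) \leq f(\omega_{0,n}) + a_n$) against the Lipschitz bound to conclude $|f(\omega_{0,n}) - f(\omega_n)| \leq \epsilon/2 + a_n$, a contradiction; the supremum statement is then deduced \emph{afterwards} from the near-maximizer statement via $|\sup_{\omega\in\Omega_n}f(\omega) - \sup_{\omega\in\Omega_0}f(\omega)| \leq |f(\omega_{0,n}) - f(\omega_n)| + 2a_n$. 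You invert this order: you first prove the inequality $|\sup_{\omega\in\Omega_0}f(\omega) - \sup_{\omega\in\Omega_n}f(\omega)| \leq L\max(d_n, e_n)$, which involves no near-maximizers at all, and then sandwich $f(\omega_{0,n})$ and $f(\omega_n)$ within $a_n$ of their respective suprema to get the explicit bound $|f(\omega_{0,n}) - f(\omega_n)| \leq L\max(d_n, e_n) + 2a_n$. Your route is direct rather than by contradiction, makes the ``in particular'' conclusion the primary step rather than a corollary, yields a quantitative rate in terms of $d_n$, $e_n$, and $a_n$, and handles the randomness cleanly by working on the probability-one event $\{d_n \to 0\} \cap \{e_n \to 0\}$ while noting $a_n \to 0$ deterministically. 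The underlying ingredients (the Lipschitz estimate, $\epsilon$-approximate nearest points since the infima need not be attained, and the $a_n$ slack) are the same in both arguments, so neither approach requires more hypotheses than the other; yours is arguably the cleaner presentation.
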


\begin{proof} 

Say that both $d_n$ and $e_n$ tend to zero almost surely, and denote by $K>0$ the Lipschitz constant of $f$. Suppose that for some $\epsilon > 0$ we have that \[P\bigg\{\limsup_n|f(\omega_{n})-f(\omega_{0,n})|>\epsilon\bigg\}>0\ .\] We will show that this leads to a contradiction, and thus that it must be true that 

\noindent $P\{\limsup_n|f(\omega_{n})-f(\omega_{0,n})|>\epsilon\}=0$ for each $\epsilon>0$, thus establishing the desired result.

On a set of probability one, there exists an $n_{\epsilon}\geq 1$ such that, for each $n\geq n_\epsilon$, there exists $\omega^*_{n}\in\Omega_0$ and $\omega^*_{0,n}\in\Omega_n$ satisfying $d(\omega^*_n,\omega_n)<\epsilon/(2K)$ and $d(\omega^*_{0,n},\omega_{0,n})<\epsilon/(2K)$. Then, on this same set, for $n \geq n_{\epsilon}$, $|f(\omega^*_{n}) - f(\omega_n)| \leq \epsilon/2$ and $|f(\omega^*_{0,n}) - f(\omega_{0,n})| \leq \epsilon/2$, so that $f(\omega_{0,n}) \leq f(\omega^*_{0,n}) + \epsilon/2$ and $f(\omega_n) \leq f(\omega^*_{n}) + \epsilon/2$ in particular. Since $\omega^*_{0,n}\in\Omega_n$ and $\omega_n^*\in\Omega_0$, it must also be true that $f(\omega^*_{0,n})\leq f(\omega_n)+a_n$ and $f(\omega^*_n)\leq f(\omega_{0,n})+a_n$. This then implies that $|f(\omega_{0,n}) - f(\omega_n)| \leq \epsilon/2+a_n$ for all $n \geq n_{\epsilon}$ on a set of probability one. Since $a_n$ tends to zero deterministically, this yields the sought contradiction.

To establish the last portion of the Lemma, we simply use the first part along with the fact that \[\bigg|\sup_{\omega\in\Omega_n}f(\omega)-\sup_{\omega\in\Omega_0}f(\omega)\bigg|\ \leq\ |f(\omega_{0,n})-f(\omega_n)| + 2a_n\ .\]
\end{proof}

\begin{lem}
Under conditions (1)--(5), we have that
\[ \sup_{({\btheta}, \delta) \in \Omega}\left| \tilde{\mbox{FPR}}_{n_0}({\btheta},\delta) - \mbox{FPR}({\btheta},\delta) \right| \longrightarrow 0 ,\,\, \sup_{({\btheta}, \delta) \in \Omega}\left| \tilde{\mbox{TPR}}_{n_1}({\btheta},\delta) - \mbox{TPR}({\btheta},\delta) \right| \longrightarrow 0 \]
almost surely as $n$ tends to $+\infty$, where $\Omega = \lbrace {(\btheta},\delta)\in  \mathbb{R}^p\times \mathbb{R}: ||{\btheta}||=1 \rbrace$.
\end{lem}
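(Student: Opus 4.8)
The plan is to prove both statements by a single argument, so I focus on the FPR; the TPR claim follows verbatim with $n_1$, the distribution $F_1$, and the case observations replacing $n_0$, $F_0$, and the control observations. The first step is to insert the population-smoothed quantity $g_n(\btheta,\delta) := E\{\Phi((\btheta^{\top}\textbf{X} - \delta)/h)\mid D=0\}$ and split the target into a deterministic smoothing-bias term and a stochastic fluctuation term:
\[ \tilde{\mbox{FPR}}_{n_0}(\btheta,\delta) - \mbox{FPR}(\btheta,\delta)\ =\ \{\tilde{\mbox{FPR}}_{n_0}(\btheta,\delta) - g_n(\btheta,\delta)\}\ +\ \{g_n(\btheta,\delta) - \mbox{FPR}(\btheta,\delta)\}. \]
I would then bound the supremum over $\Omega$ of each bracketed term separately, the second deterministically and the first almost surely.

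For the bias term, I would exploit the representation $E\{\Phi((U-\delta)/h)\} = P(U - hZ > \delta)$, where $U = \btheta^{\top}\textbf{X}$ given $D=0$ and $Z$ is a standard normal variable independent of $U$; this holds because $\Phi((U-\delta)/h) = P(hZ < U-\delta \mid U)$. Writing $F_{\btheta}$ for the distribution function of $U$, which is continuous by condition (4), and conditioning on $Z$, the bias equals $E_Z\{F_{\btheta}(\delta) - F_{\btheta}(\delta + hZ)\}$, whose absolute value is at most $E_Z|F_{\btheta}(\delta + hZ) - F_{\btheta}(\delta)|$. Condition (4) supplies a single Lipschitz constant $L$ valid for every $F_{\btheta}$ with $\|\btheta\|=1$, so this is bounded by $L h\, E|Z| = L h\sqrt{2/\pi}$, uniformly in $(\btheta,\delta)\in\Omega$. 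Since $h \to 0$ as $n\to\infty$, the supremum of the bias term over $\Omega$ vanishes deterministically.

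For the stochastic term, I would establish a uniform strong law of large numbers over the bandwidth-dependent class $\mathcal{G}_h = \{x \mapsto \Phi((\btheta^{\top}x - \delta)/h): \|\btheta\|=1,\ \delta\in\mathbb{R}\}$. The key observation is that each such function is the fixed increasing transformation $\Phi$ applied to an element of the $(p+1)$-dimensional vector space of affine functions $x \mapsto a^{\top}x + b$ (here $a = \btheta/h$, $b=-\delta/h$); equivalently, the subgraph of $\Phi((\btheta^{\top}x-\delta)/h)$ is $\{(x,u): \btheta^{\top}x - \delta - h\Phi^{-1}(u) > 0\}$, a half-space in the coordinates $(x,\Phi^{-1}(u))$. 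Letting $(\btheta,\delta,h)$ range over all admissible values, these subgraphs form a Vapnik--Chervonenkis class of index at most $p+2$, independent of $h$ and $n$. I would therefore embed every $\mathcal{G}_{h_n}$ into the single master class $\mathcal{G} = \bigcup_{h>0}\mathcal{G}_h$, which has constant envelope $1$, and invoke the Glivenko--Cantelli theorem for VC classes to obtain
\[ \sup_{g\in\mathcal{G}}\bigg| \frac{1}{n_0}\sum_{j=1}^{n_0} g(\textbf{X}_{0j}) - E\{g(\textbf{X}_0)\}\bigg|\ \longrightarrow\ 0 \]
almost surely as $n_0\to\infty$, which is guaranteed by condition (1). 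Since $g_{\btheta,\delta,h_n}\in\mathcal{G}$ for every $(\btheta,\delta)$ and every $n$, the supremum of the stochastic term over $\Omega$ is dominated by this quantity and so tends to zero almost surely. Combining the two bounds yields the FPR claim, and the parallel argument for the cases yields the TPR claim.

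The main obstacle I anticipate is the dependence of the function class on the sample size through $h = h_n$: a direct appeal to Glivenko--Cantelli for a fixed class is not available, because the functions being averaged themselves change with $n$. The device that resolves this is precisely the embedding of all the bandwidth-dependent classes into one VC class by allowing $h$ to vary freely, so that a single uniform law of large numbers handles every $n$ simultaneously. Verifying that this master class is genuinely VC with index bounded independently of $h$ and $n$ --- via the half-space representation of its subgraphs, or equivalently the fact that a monotone transformation of a finite-dimensional space of affine functions is VC-subgraph --- is the technical crux of the argument.
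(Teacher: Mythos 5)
Your proof is correct, and its overall architecture matches the paper's: the same decomposition into a stochastic fluctuation term plus a deterministic smoothing-bias term, and the same resolution of the $h=h_n$ dependence in the stochastic term, namely folding $h$ into the index set so that the functions $\textbf{x}\mapsto\Phi\{(\btheta^{\top}\textbf{x}-\delta)/h\}$ over all $\|\btheta\|=1$, $\delta\in\mathbb{R}$, $h>0$ form a single VC class with envelope $1$, hence a Glivenko--Cantelli class; the paper simply cites the fact that a monotone transformation of an affine (VC) class is VC, and your half-space representation of the subgraphs in the coordinates $(\textbf{x},\Phi^{-1}(u))$ is exactly the proof of that cited fact. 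Where you genuinely depart from the paper is the bias term. The paper substitutes $u=(s-v)/h$, invokes integration by parts together with Lemma 2.1 of Winter (1979) to rewrite the smoothed expectation as $\int\phi(u)F(s-hu)\,du$, and then bounds $|F(s-hu)-F(s)|\leq M|hu|$; moreover, because that computation is phrased for a scalar random variable with a density, the paper needs an additional change-of-variables step (carried out explicitly only for $p=2$, with a Jacobian, and asserted ``analogously'' for $p>2$) to reduce $\int\Phi\{(\btheta^{\top}\textbf{x}-\delta)/h\}\,dF_0(\textbf{x})$ to a univariate integral against the law of $\btheta^{\top}\textbf{X}$. Your representation $E\{\Phi((U-\delta)/h)\}=P(U-hZ>\delta)$, with $Z$ standard normal independent of $U=\btheta^{\top}\textbf{X}$, followed by conditioning on $Z$, reaches the identical bound $Lh\sqrt{2/\pi}$ more cleanly: it requires no density, no integration by parts, and no Jacobian reduction, since you work with the law of $\btheta^{\top}\textbf{X}$ from the outset, which is precisely the object that condition (4) constrains uniformly in $\btheta$. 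Both arguments then conclude identically by letting $h\to 0$, so your route is a modest but genuine simplification of the paper's bias analysis grafted onto the same empirical-process skeleton.
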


\begin{proof}
We prove the claim for the false positive rate (FPR); the proof for the true positive rate (TPR) is analogous. We can write
\begin{equation*}
\begin{aligned}
\sup_{({\btheta},\delta) \in \Omega} \left| \tilde{\mbox{FPR}}_{n_0}({\btheta},\delta) - \mbox{FPR}({\btheta},\delta) \right|\ &\leq\ \sup_{({\btheta},\delta) \in \Omega} \left| \tilde{\mbox{FPR}}_{n_0}({\btheta},\delta) - E\lbrace\tilde{\mbox{FPR}}_{n_0}({\btheta},\delta)\rbrace \right| \\
&\hspace{.3in}+ \sup_{({\btheta},\delta) \in \Omega} \left| E\lbrace\tilde{\mbox{FPR}}_{n_0}({\btheta},\delta)\rbrace - \mbox{FPR}({\btheta},\delta) \right|.
\end{aligned}
\end{equation*}
First, we consider $ \tilde{\mbox{FPR}}_{n_0}({\btheta},\delta) - E\lbrace\tilde{\mbox{FPR}}_{n_0}({\btheta},\delta)\rbrace $. We can write this as 
\[ \tilde{\mbox{FPR}}_{n_0}({\btheta},\delta) - E\lbrace\tilde{\mbox{FPR}}_{n_0}({\btheta},\delta)\rbrace = \frac{1}{n_0}\sum_{j=1}^{n_0} \Phi\left( \frac{{\btheta}^{\top}\textbf{X}_{0j} - \delta}{h} \right) - \int \Phi\left( \frac{{\btheta}^{\top}\textbf{x} - \delta}{h} \right) dF_0(\textbf{x})\ .\]
The class of functions $\mathcal{G}_1 = \lbrace ({\btheta}, \delta) \mapsto {\btheta}^{\top}\textbf{x} - \delta : {\btheta} \in \mathbb{R}^p, \delta \in \mathbb{R}, \textbf{x} \in \mathbb{R}^p \rbrace$ is a Vapnik--Chervonenkis (VC) class. Since $u\mapsto \Phi(u/h)$ is monotone for each $h>0$, the class of functions $\mathcal{G}_2 = \lbrace ({\btheta}, \delta) \mapsto \Phi\{({\btheta}^{\top}\textbf{x} - \delta)/h\} : {\btheta} \in \mathbb{R}^p, \delta \in \mathbb{R}, \textbf{x} \in \mathbb{R}^p, h > 0 \rbrace$ is also VC \citep{kosorok2008,vandervaartbook,vdvwellnerbook}. Since the constant 1 is an applicable envelope function for this class, $\mathcal{G}_2$ is $F_0$--Glivenko-Cantelli, giving that \citep{kosorok2008,vdvwellnerbook}
\[ \sup_{({\btheta},\delta) \in \Omega} \left| \tilde{\mbox{FPR}}_{n_0}({\btheta},\delta) - E\lbrace\tilde{\mbox{FPR}}_{n_0}({\btheta},\delta)\rbrace \right| \longrightarrow 0\] almost surely.

Next, we consider $ E\lbrace\tilde{\mbox{FPR}}_{n_0}({\btheta},\delta)\rbrace - \mbox{FPR}({\btheta},\delta) $. We can write this as
\begin{equation*}
E\lbrace\tilde{\mbox{FPR}}_{n_0}({\btheta},\delta)\rbrace - \mbox{FPR}({\btheta},\delta) = \int \Phi\left( \frac{{\btheta}^{\top}\textbf{x} - \delta}{h} \right) dF_0(\textbf{x}) - P({\btheta}^{\top}\textbf{X} > \delta \mid  D=0).
\end{equation*} 
For a general random variable $V$ with distribution function $F$ that is Lipschitz continuous, say with constant $M>0$, we can write
\begin{equation*}
E\left\lbrace \Phi\left( \frac{s-V}{h} \right) \right\rbrace = \int \Phi \left( \frac{s-v}{h} \right) dF(v) = h \int \Phi(u)f(s-hu)du
\end{equation*}
with $u = (s-v)/h$. Using integration by parts and Lemma 2.1 from~\cite{winter1979}, this becomes
\begin{equation*}
h \int \Phi(u)f(s-hu)du  = \int \phi(u)F(s-hu)du\ ,
\end{equation*}
and so, we find that 
\begin{equation*}
\begin{aligned}
\left| E\left\lbrace \Phi\left( \frac{s-V}{h} \right) \right\rbrace - F(s) \right|\ &=\ \left| \int \phi(u)F(s-hu)du - F(s) \right|\\
&\leq\ \int \left| F(s-hu) - F(s)\right| \phi(u) du 
\\
& \leq\ M\int |hu|\phi(u)du = M h \left(\frac{2}{\pi} \right)^{1/2}.
\end{aligned}
\end{equation*}
Since $h$ tends to zero as $n$ tends to infinity, this implies that
\[ \sup_s \left| E\left\lbrace \Phi\left( \frac{s-V}{h} \right) \right\rbrace - F(s) \right| = o(1)\ . \] 

We now return to ${\btheta}^{\top}\textbf{X}$ and consider the case $p=2$, so that ${\btheta}^{\top}\textbf{X} = \theta_1X_1 + \theta_2X_2.$ Let $Y_1 = \theta_1 X_1 + \theta_2 X_2$ and $Y_2 = \theta_2X_2$. Then, we have that $f_{Y_1,Y_2}(y_1,y_2) = f_{X_1,X_2}(x_1,x_2)|\theta_1\theta_2|^{-1}$, where $x_1=x_1(y_1,y_2)=(y_1-y_2)/\theta_1$ and $x_2=x_2(y_1,y_2)=y_2/\theta_2$. We find that
\begin{equation*}
\int \Phi\left( \frac{s-{\btheta}^{\top}\textbf{x}}{h} \right) dF_{\textbf{X}}(\textbf{x}) = \int \Phi\left( \frac{s-y_1}{h} \right) dF_{Y}(y) = \int\Phi\left( \frac{s-y_1}{h} \right) dF_{Y_1}(y_1)
\end{equation*}
for any $s \in \mathbb{R}$. Since $P({\btheta}^{\top}\textbf{X} \leq \delta \mid  D=0) = P(Y_1 \leq \delta \mid  D=0)$, we can write
\begin{align*}
&\sup_{({\btheta},\delta) \in \Omega} \left| \int \Phi\left( \frac{{\btheta}^{\top}\textbf{x} - \delta}{h} \right) dF_0(\textbf{x}) - P({\btheta}^{\top}\textbf{X} > \delta \mid  D=0) \right| \\
&\hspace{.2in}=\ \sup_{\delta \in \mathbb{R}} \left| \int \Phi\left( \frac{y_1 - \delta}{h} \right) dF_{Y_1|D=0}(y_1) - P(Y_1 > \delta \mid  D=0) \right| \\
&\hspace{.2in}=\ \sup_{\delta \in \mathbb{R}} \left| \int \Phi\left( \frac{\delta - y_1}{h} \right) dF_{Y_1|D=0}(y_1) - P(Y_1 \leq \delta \mid  D=0) \right|\ ,
\end{align*}
implying, in view of condition (4) and the results above, that
\begin{equation*}
\sup_{({\btheta},\delta) \in \Omega} \left| \int \Phi\left( \frac{{\btheta}^{\top}\textbf{x} - \delta}{h} \right) dF_0(\textbf{x}) - P({\btheta}^{\top}\textbf{X} > \delta \mid  D=0) \right| = o(1)\ .  
\end{equation*}
The result for $p > 2$ can be proved analogously.
 
Combining these results, we conclude that 
$\sup_{({\btheta},\delta) \in \Omega} | \tilde{\mbox{FPR}}_{n_0}({\btheta},\delta) - \mbox{FPR}({\btheta},\delta)|$ tends to zero almost surely, as claimed.
\end{proof}

\begin{proof}[Proof of Theorem 1] 
First, we show that $\limsup_n\mbox{FPR}(\tilde{{\btheta}}_{t},\tilde{\delta}_{t}) \leq t$ almost surely. We can write
\begin{align*}
\mbox{FPR}(\tilde{{\btheta}}_{t},\tilde{\delta}_{t})\ &=\ \tilde{\mbox{FPR}}_{n_0}(\tilde{{\btheta}}_{t},\tilde{\delta}_{t}) +  \lbrace\mbox{FPR}(\tilde{{\btheta}}_{t},\tilde{\delta}_{t}) - \tilde{\mbox{FPR}}_{n_0}(\tilde{{\btheta}}_{t},\tilde{\delta}_{t}) \rbrace \\
&\hspace{-.3in} \leq\ \tilde{\mbox{FPR}}_{n_0}(\tilde{{\btheta}}_{t},\tilde{\delta}_{t}) + |\mbox{FPR}(\tilde{{\btheta}}_{t},\tilde{\delta}_{t}) - \tilde{\mbox{FPR}}_{n_0}(\tilde{{\btheta}}_{t},\tilde{\delta}_{t})| \\
&\hspace{-.3in} \leq\ \tilde{\mbox{FPR}}_{n_0}(\tilde{{\btheta}}_{t},\tilde{\delta}_{t}) + \sup_{({\btheta},\delta) \in \Omega} |\mbox{FPR}({\btheta},\delta) - \tilde{\mbox{FPR}}_{n_0}({\btheta},\delta) |\ \leq\  t + \sup_{({\btheta},\delta) \in \Omega} |\mbox{FPR}({\btheta},\delta) - \tilde{\mbox{FPR}}_{n_0}({\btheta},\delta) |\ .
\end{align*} As such, it follows that \[P\{\limsup_n\mbox{FPR}(\tilde{\btheta}_t,\tilde{\delta}_t)\leq t\}\ \geq\ P\{\limsup_n \sup_{(\btheta,\delta)\in \Omega}
|\mbox{FPR}(\btheta,\delta)-\tilde{\mbox{FPR}}_{n_0}(\btheta,\delta)|=0\}\ =\ 1\] in view of Lemma 2, thereby establishing the first part of the theorem.

Let $t\in(0,1)$ be fixed. We now establish that \[\bigg| \mbox{TPR}(\tilde{{\btheta}}_{t}, \tilde{\delta}_{t}) - \sup_{({\btheta}, \delta) \in \Omega_t} \mbox{TPR}({\btheta}, \delta) \bigg| \longrightarrow 0\] almost surely.
For convenience, denote $(\btheta,\delta)$ by $\omega$. Consider the function $f$ defined pointwise as $f(\omega) = \mbox{TPR}(\btheta, \delta)$, and set $\Omega_0 = \Omega_t$ and $\Omega_n = \tilde{\Omega}_{t,n_0}$ for each $n\geq 1$. We verify that the conditions of Lemma 1 hold for these particular choices. We have that $f(\omega) = \mbox{TPR}(\btheta, \delta)$ is a bounded function. We must show $d_{n_0}$ and $e_{n_0}$ tend to zero almost surely, where 
\[ d_{n_0} = \adjustlimits \sup_{\omega \in \tilde{\Omega}_{t,n_0}} \inf_{\tilde{\omega} \in \Omega_t} d(\omega, \tilde{\omega}),\;\;\;\; e_{n_0} = \adjustlimits \sup_{\omega \in \Omega_t} \inf_{\tilde{\omega} \in \tilde{\Omega}_{t,n_0}} d(\omega, \tilde{\omega}), \] 
and $d$ is the Euclidean distance in $\mathbb{R}^{p+1}$. We consider $d_{n_0}$ first. Denote by $G_{\btheta}$ the conditional distribution function of $\btheta^\top \textbf{X}$ given $D=0$. By assumption, the corresponding conditional quantile function, denoted by $G_{\btheta}^{-1}$, is uniformly Lipschitz continuous over $\{\btheta\in\mathbb{R}^p:\|\btheta\|=1\}$, say with constant $C>0$ independent of $\btheta$. Suppose that, for some $\kappa > 0$, $\sup_{\omega \in \tilde{\Omega}_{t,n_0}} |\tilde{\mbox{FPR}}_{n_0}(\omega) - \mbox{FPR}(\omega)| \leq \kappa$. Because it is true that
\begin{equation*}
\kappa\ \geq\ \sup_{\omega \in \tilde{\Omega}_{t,n_0}} |\tilde{\mbox{FPR}}_{n_0}(\omega) - \mbox{FPR}(\omega)|\ \geq\ \bigg| \sup_{\omega \in \tilde{\Omega}_{t,n_0}} \tilde{\mbox{FPR}}_{n_0}(\omega) - \sup_{\omega \in \tilde{\Omega}_{t,n_0}} \mbox{FPR}(\omega)\bigg|\ ,
\end{equation*}
then $\sup_{\omega \in \tilde{\Omega}_{t,n_0}} \mbox{FPR}(\omega) \leq \kappa + t$, giving $\tilde{\mbox{FPR}}_{n_0}(\omega) \leq t$ and $\mbox{FPR}(\omega) \leq \kappa + t$ for each $\omega \in \tilde{\Omega}_{t,n_0}$.

For any given $\omega = ({\btheta}, \delta) \in \tilde{\Omega}_{t,n_0}$, write $t_*(\omega)=G_{\btheta}(\delta)$, giving $t_*(\omega)=\mbox{FPR}(\omega) \leq \kappa + t$. If $t_*(\omega)\leq t$, note also that $\omega\in \Omega_t$ and set $\omega_*=\omega$. Otherwise, find $\delta_*$ such that $1-G_{\btheta}(\delta_*) = t$, namely by taking $\delta_* = G_{\btheta}^{-1}(1-t)$. Defining $\omega_* = ({\btheta}, \delta_*) \in \Omega_t$, observe that
\[ d(\omega, \omega_*)\ =\  |\delta - \delta_*|\ =\ |G_{\btheta}^{-1}(1-t_*(\omega)) - G_{\btheta}^{-1}(1-t)|\ \leq\ C|t - t_*(\omega)|\ \leq\ C\kappa\ .\] Thus, for each $\omega \in \tilde{\Omega}_{t,n_0}$, it is true that $ \inf_{\tilde{\omega} \in \Omega_t} d(\omega, \tilde{\omega}) \leq C\kappa$ and therefore $d_{n_0} \leq C\kappa$. As such, if $d_{n_0} > \epsilon$ for some $\epsilon > 0$, then $\sup_{\omega \in \tilde{\Omega}_{t,n_0}} |\tilde{\mbox{FPR}}_{n_0}(\omega) - \mbox{FPR}(\omega)| > \kappa_{\epsilon}$ for $\kappa_\epsilon = \epsilon/C$.
This implies that
\[ P\bigg(\sup_{m \geq n_0} d_m > \epsilon \bigg)\ \leq\ P\bigg(\sup_{m \geq n_0} \sup_{\omega \in \tilde{\Omega}_{t,m}} |\tilde{\mbox{FPR}}_m(\omega) - \mbox{FPR}(\omega)| > \kappa_{\epsilon}\bigg)\ \longrightarrow\ 0 \]
by Lemma 2. Thus, $d_n$ tends to zero almost surely since, for each $\epsilon>0$, \[P\bigg(\limsup_{m}\{d_{m}\geq \epsilon\}\bigg)\ \leq\ P\bigg(\limsup_{m}d_{m}\geq \epsilon\bigg)\ =\ 0\ .\] Using similar arguments, we may show that $e_n$ also tends to zero almost surely.

The fact that $d_n$ and $e_n$ tend to zero almost surely implies, in view of Lemma 1, that we have that $|\sup_{({\btheta}, \delta) \in \Omega_t} \mbox{TPR}({\btheta}, \delta) - \sup_{({\btheta}, \delta) \in \tilde{\Omega}_{t,n_0}} \mbox{TPR}({\btheta}, \delta)|$ tends to zero almost surely. 
Combining this with an application of Lemma 2, we have that
\begin{align*}
&\bigg| \sup_{({\btheta}, \delta) \in \Omega_t} \mbox{TPR}({\btheta}, \delta) \, - \sup_{({\btheta}, \delta) \in \tilde{\Omega}_{t,n_0}} \tilde{\mbox{TPR}}_{n_1}({\btheta}, \delta) \bigg| \\
&\hspace{.3in}\leq\ \bigg| \sup_{({\btheta}, \delta) \in \Omega_t} \mbox{TPR}({\btheta}, \delta) - \sup_{({\btheta}, \delta) \in \tilde{\Omega}_{t,n_0}}\mbox{TPR}({\btheta}, \delta) \bigg| + \bigg| \sup_{({\btheta}, \delta) \in \tilde{\Omega}_{t,n_0}}\mbox{TPR}({\btheta}, \delta) - \sup_{({\btheta}, \delta) \in \tilde{\Omega}_{t,n_0}} \tilde{\mbox{TPR}}_{n_1}({\btheta}, \delta) \bigg| \\
&\hspace{.3in}\leq\ \bigg| \sup_{({\btheta}, \delta) \in \Omega_t} \mbox{TPR}({\btheta}, \delta) - \sup_{({\btheta}, \delta) \in \tilde{\Omega}_{t,n_0}}\mbox{TPR}({\btheta}, \delta) \bigg| + \sup_{({\btheta}, \delta) \in \tilde{\Omega}_{t,n_0}} | \mbox{TPR}({\btheta}, \delta) - \tilde{\mbox{TPR}}_{n_1}({\btheta}, \delta) | \ \longrightarrow\ 0 
\end{align*}
almost surely. Since $| \mbox{TPR}(\tilde{\btheta}_t, \tilde{\delta}_t) - \tilde{\mbox{TPR}}_{n_1}(\tilde{\btheta}_t, \tilde{\delta}_t) | \leq \sup_{({\btheta}, \delta) \in \Omega} | \mbox{TPR}({\btheta}, \delta) - \tilde{\mbox{TPR}}_{n_1}({\btheta}, \delta) |$ and, by Lemma 2, $\sup_{({\btheta}, \delta) \in \Omega} | \mbox{TPR}({\btheta}, \delta) - \tilde{\mbox{TPR}}_{n_1}({\btheta}, \delta) |$ tends to zero almost surely, $| \mbox{TPR}(\tilde{\btheta}_t, \tilde{\delta}_t) - \tilde{\mbox{TPR}}_{n_1}(\tilde{\btheta}_t, \tilde{\delta}_t) |$ tends to zero almost surely. In addition, since $(\tilde{\btheta}_t, \tilde{\delta}_t)$ is a near-maximizer of $\tilde{\mbox{TPR}}_{n_1}$, $\sup_{({\btheta}, \delta) \in \tilde{\Omega}_{t,n_0}} \tilde{\mbox{TPR}}_{n_1}({\btheta}, \delta) \leq \tilde{\mbox{TPR}}_{n_1}(\tilde{\btheta}_t, \tilde{\delta}_t) + a_n$, giving 
\begin{equation*}
\begin{aligned}
&\bigg| \sup_{({\btheta}, \delta) \in \Omega_t} \mbox{TPR}({\btheta}, \delta) - \mbox{TPR}(\tilde{\btheta}_t, \tilde{\delta}_t) \bigg| \\
&\hspace{.3in}\leq\ \bigg| \sup_{({\btheta}, \delta) \in \Omega_t} \mbox{TPR}({\btheta}, \delta) - \sup_{({\btheta}, \delta) \in \tilde{\Omega}_{t,n_0}} \tilde{\mbox{TPR}}_{n_1}({\btheta}, \delta) \bigg| + \bigg|\sup_{({\btheta}, \delta) \in \tilde{\Omega}_{t,n_0}} \tilde{\mbox{TPR}}_{n_1}({\btheta}, \delta) - \mbox{TPR}(\tilde{\btheta}_t, \tilde{\delta}_t) \bigg| \\
&\hspace{.3in}\leq\ \bigg| \sup_{({\btheta}, \delta) \in \Omega_t} \mbox{TPR}({\btheta}, \delta) - \sup_{({\btheta}, \delta) \in \tilde{\Omega}_{t,n_0}} \tilde{\mbox{TPR}}_{n_1}({\btheta}, \delta) \bigg| + \bigg|\sup_{({\btheta}, \delta) \in \tilde{\Omega}_{t,n_0}} \tilde{\mbox{TPR}}_{n_1}({\btheta}, \delta) - \tilde{\mbox{TPR}}_{n_1}(\tilde{\btheta}_t, \tilde{\delta}_t) \bigg| \\
&\hspace{.5in} + \bigg|\tilde{\mbox{TPR}}_{n_1}(\tilde{\btheta}_t, \tilde{\delta}_t) - \mbox{TPR}(\tilde{\btheta}_t, \tilde{\delta}_t) \bigg| \\
&\hspace{.3in}\leq\ \bigg| \sup_{({\btheta}, \delta) \in \Omega_t} \mbox{TPR}({\btheta}, \delta) - \sup_{({\btheta}, \delta) \in \tilde{\Omega}_{t,n_0}} \tilde{\mbox{TPR}}_{n_1}({\btheta}, \delta) \bigg| + a_n + \bigg|\tilde{\mbox{TPR}}_{n_1}(\tilde{\btheta}_t, \tilde{\delta}_t) - \mbox{TPR}(\tilde{\btheta}_t, \tilde{\delta}_t) \bigg|\ \longrightarrow\ 0
\end{aligned}
\end{equation*}
almost surely, completing the proof.
\end{proof}

\clearpage

\section*{Appendix C: Simulation results for data with outliers}

Note: When simulating with outliers, the true biomarker combination was occasionally so large that it returned a non-value for the outcome $D$; for example, with $f_1(v) = \mbox{expit}(v)$, this occurs in \texttt{R} when $v > 800$. These observations had to be removed from the simulated dataset, though this affected an extremely small fraction of observations. 

\begin{table}[ht!]
\def~{\hphantom{0}}
\small
\caption{Mean TPR and FPR and corresponding standard deviation (in parentheses) for $f(v) = f_1(v) \equiv \mbox{expit}(v) = e^v/(1+e^v)$ and $\beta_0$ = 0 across 1000 simulations. The TPR is based on the threshold corresponding to a FPR of $t$ in the test data whereas the FPR is based on the thresholds estimated in the training data. $n$, size of the training dataset; $t$, acceptable FPR; TPR, true positive rate; FPR, false positive rate; GLM, standard logistic regression; rGLM, robust logistic regression; sTPR, proposed method. All numbers are percentages. }{
\begin{center}
\begin{tabular}{ccc@{\hskip 0.2in}ccc}
\\ \hline
Outliers & $n$ & Measure & \multicolumn{3}{c}{Method} \\ \cline{4-6} \\ [-8pt]
& & & GLM & rGLM & sTPR \\ \hline 
\multicolumn{6}{c}{$t=$ 0.05} \\ [1pt]
Yes & 200 & TPR & 12.2 (2.1) & 13.6 (2.6) & 13.4 (2.7) \\
& & FPR & 5.7 (2.2) & 5.9 (2.3) & 6.4 (2.4) \\ [1.5pt]
   & 400 & TPR & 12.1 (1.7) & 14.1 (2.3) & 13.9 (2.4) \\
   & & FPR & 5.4 (1.6) & 5.4 (1.6) & 5.9 (1.7) \\ [1.5pt]
   & 800 & TPR & 11.8 (1.2) & 14.4 (2.2) & 14.4 (2.3) \\
   & & FPR & 5.1 (1.1) & 5.2 (1.1) & 5.5 (1.2) \\ [1.5pt]
  No & 200 & TPR & 18.3 (0.6) & 18.3 (0.6) & 17.8 (1.8) \\
  & & FPR & 5.5 (2.2) & 5.5 (2.2) & 6.2 (2.4) \\ [1.5pt]
   & 400 & TPR & 18.5 (0.3) & 18.5 (0.3) & 18.1 (1.6) \\
   & & FPR & 5.3 (1.5) & 5.3 (1.5) & 5.7 (1.6) \\ [1.5pt]
   & 800 & TPR & 18.6 (0.2) & 18.6 (0.2) & 18.4 (1.2) \\ 
   & & FPR & 5.2 (1.1) & 5.2 (1.1) & 5.5 (1.2) \\ \hline
  \multicolumn{6}{c}{$t=$ 0.10} \\ [1pt]
  Yes & 200 & TPR & 22.5 (3.8) & 24.6 (4.3) & 24.6 (4.2) \\
   & & FPR & 10.9 (3.1) & 11.1 (3.0) & 11.7 (3.2) \\ [1.5pt]
   & 400 & TPR & 21.8 (2.8) & 25.1 (4.0) & 25.2 (4.0) \\
   & & FPR & 10.4 (2.0) & 10.5 (2.1) & 11.0 (2.1) \\ [1.5pt]
   & 800 & TPR & 21.4 (2.0) & 25.7 (3.6) & 25.8 (3.6) \\
   & & FPR & 10.1 (1.5) & 10.1 (1.5) & 10.5 (1.5) \\ [1.5pt]
  No & 200 & TPR & 29.4 (0.8) & 29.5 (0.8) & 28.9 (2.2) \\
  & & FPR & 10.5 (3.1) & 10.5 (3.1) & 11.4 (3.2) \\ [1.5pt]
   & 400 & TPR & 29.8 (0.4) & 29.8 (0.4) & 29.5 (1.3) \\
   & & FPR & 10.4 (2.1) & 10.4 (2.1) & 10.9 (2.2) \\ [1.5pt]
   & 800 & TPR & 29.9 (0.2) & 29.9 (0.2) & 29.7 (1.5) \\
   & & FPR & 10.2 (1.5) & 10.2 (1.5)  & 10.6 (1.5) \\ \hline
  \multicolumn{6}{c}{$t=$ 0.20} \\ [1pt]
  Yes & 200 & TPR & 38.0 (5.1) & 40.8 (5.8) & 41.0 (5.7) \\
  & & FPR & 20.9 (4.0) & 21.1 (4.0) & 21.8 (4.0) \\ [1.5pt]
   & 400 & TPR & 37.4 (3.9) & 41.7 (5.3) & 41.9 (5.2) \\
   & & FPR & 20.5 (2.8) & 20.6 (2.9) & 21.1 (2.9) \\ [1.5pt]
   & 800 & TPR & 36.9 (2.9) & 42.4 (4.6) & 43.0 (4.4) \\
   & & FPR & 20.2 (2.0) & 20.4 (2.0) & 20.7 (2.0) \\ [1.5pt]
  No & 200 & TPR & 46.1 (0.9) & 46.1 (0.9) & 45.7 (1.5) \\
  & & FPR & 20.7 (4.1) & 20.8 (4.1) & 21.7 (4.2) \\  [1.5pt]
   & 400 & TPR & 46.4 (0.5) & 46.4 (0.5) & 46.2 (0.8) \\
   & & FPR & 20.3 (2.8) & 20.3 (2.8) & 21.0 (2.8) \\ [1.5pt]
   & 800 & TPR & 46.5 (0.2) & 46.5 (0.3) & 46.4 (0.6) \\
   & & FPR & 20.1 (2.0) & 20.1 (2.0) & 20.5 (2.0) \\ \hline
\end{tabular}
\end{center}}
\label{f1mod}
\end{table}

\begin{table}
\def~{\hphantom{0}}
\small
\caption{Mean TPR and FPR and corresponding standard deviation (in parentheses) for $f(v) = f_2(v) \equiv 1(v < 0)\times (1+e^{-v/3})^{-1} + 1(v \geq 0)\times (1+e^{-3v})^{-1}$ and $\beta_0$ = 0 across 1000 simulations. The TPR is based on the threshold corresponding to a FPR of $t$ in the test data whereas the FPR is based on the thresholds estimated in the training data. $n$, size of the training dataset; $t$, acceptable FPR; TPR, true positive rate; FPR, false positive rate; GLM, standard logistic regression; rGLM, robust logistic regression; sTPR, proposed method. All numbers are percentages. }{
\begin{center}
\begin{tabular}{ccc@{\hskip 0.2in}ccc}
\\ \hline
Outliers & $n$ & Measure & \multicolumn{3}{c}{Method} \\ \cline{4-6} \\ [-8pt]
& & & GLM & rGLM & sTPR \\ \hline 
\multicolumn{6}{c}{$t=$ 0.05} \\ [1pt]
Yes & 200 & TPR & 20.2 (7.3) & 26.4 (9.1) & 27.7 (9.2) \\
& & FPR & 5.9 (2.6) & 6.0 (2.6) & 6.5 (2.8) \\ [1.5pt]
   & 400 & TPR  & 19.0 (5.9) & 27.6 (8.5) & 29.3 (8.2) \\ 
   & & FPR & 5.5 (1.8) & 5.5 (1.7) & 5.8 (1.8) \\ [1.5pt]
   & 800 & TPR  & 17.9 (4.1) & 29.4 (7.5) & 30.8 (7.3) \\
   & & FPR & 5.3 (1.3) & 5.3 (1.2) & 5.5 (1.3) \\ [1.5pt]
  No & 200 & TPR  & 37.9 (1.7) & 37.8 (1.9) & 37.5 (3.1) \\
  & & FPR & 5.8 (2.7) & 5.7 (2.7) & 6.5 (2.9) \\ [1.5pt]
   & 400 & TPR  & 38.6 (0.9) & 38.5 (1.0) & 38.3 (2.1) \\
   & & FPR & 5.3 (1.8) & 5.3 (1.8) & 5.8 (1.8) \\ [1.5pt]
   & 800 & TPR  & 38.9 (0.4) & 38.9 (0.5) & 38.6 (2.2) \\
   & & FPR & 5.2 (1.3) & 5.2 (1.3) & 5.5 (1.3) \\ \hline
  \multicolumn{6}{c}{$t=$ 0.10} \\
  Yes & 200 & TPR  & 31.1 (8.9) & 37.4 (10.8) & 39.3 (11.0) \\
  & & FPR & 11.0 (3.5) & 11.3 (3.6) & 12.0 (3.6) \\ [1.5pt]
   & 400 & TPR  & 30.3 (7.1) & 39.9 (9.8) & 41.5 (9.6) \\
   & & FPR & 10.5 (2.5) & 10.7 (2.4) & 11.0 (2.5) \\ [1.5pt]
   & 800 & TPR  & 28.9 (5.0) & 41.1 (8.9) & 43.1 (8.6) \\
   & & FPR & 10.1 (1.7) & 10.3 (1.7) & 10.6 (1.8) \\ [1.5pt]
  No & 200 & TPR  & 48.2 (1.8) & 48.0 (1.9) & 48.2 (2.0) \\
  & & FPR & 10.9 (3.5) & 10.9 (3.5) & 11.7 (3.6) \\ [1.5pt]
   & 400 & TPR  & 48.8 (0.9) & 48.7 (1.0) & 48.7 (1.1) \\
   & & FPR & 10.4 (2.4) & 10.4 (2.4) & 10.9 (2.5) \\ [1.5pt]
   & 800 & TPR  & 49.2 (0.4) & 49.1 (0.5) & 49.0 (0.6) \\
   & & FPR & 10.2 (1.7) & 10.2 (1.7) & 10.7 (1.8) \\ \hline
  \multicolumn{6}{c}{$t=$ 0.20} \\
  Yes & 200 & TPR  & 45.0 (8.1) & 50.4 (9.8) & 51.9 (9.7) \\
  & & FPR & 21.2 (4.6) & 21.5 (4.7) & 22.0 (4.8) \\ [1.5pt]
   & 400 & TPR  & 44.4 (6.3) & 52.8 (8.6) & 54.0 (8.5) \\
   & & FPR & 20.4 (3.2) & 20.8 (3.3) & 21.2 (3.4) \\ [1.5pt]
   & 800 & TPR  & 44.1 (4.8) & 54.8 (7.3) & 56.5 (6.6) \\
   & & FPR & 20.2 (2.3) & 20.3 (2.3) & 20.7 (2.3) \\ [1.5pt]
  No & 200 & TPR  & 59.5 (1.3) & 59.4 (1.4) & 59.3 (1.8) \\
  & & FPR & 21.1 (4.6) & 21.1 (4.6) & 22.1 (4.7) \\ [1.5pt]
   & 400 & TPR  & 60.0 (0.6) & 59.9 (0.7) & 59.8 (0.9) \\
   & & FPR & 20.5 (3.4) & 20.6 (3.4) & 21.2 (3.4) \\ [1.5pt]
   & 800 & TPR  & 60.2 (0.4) & 60.1 (0.4) & 60.1 (0.5) \\
   & & FPR & 20.3 (2.2) & 20.3 (2.2) & 20.7 (2.3) \\ \hline
\end{tabular}
\end{center}}
\label{f2mod}
\end{table}

\begin{table}[ht!]
\def~{\hphantom{0}}
\small
\caption{Mean TPR and FPR and corresponding standard deviation (in parentheses) in the test data for $f(v) = f_1(v) \equiv \mbox{expit}(v) = e^v/(1+e^v)$ and $\beta_0$ = $-$1.75 across 1000 simulations. The TPR is based on the threshold corresponding to a FPR of $t$ in the test data whereas the FPR is based on the thresholds estimated in the training data. $n$, size of the training dataset; $t$, acceptable FPR; TPR, true positive rate; FPR, false positive rate; GLM, standard logistic regression; rGLM, robust logistic regression; sTPR, proposed method. All numbers are percentages. }{
\begin{center}
\begin{tabular}{ccc@{\hskip 0.2in}ccc}
\\ \hline
Outliers & $n$ & Measure & \multicolumn{3}{c}{Method} \\ \cline{4-6} \\ [-8pt]
& & & GLM & rGLM & sTPR \\ \hline 
\multicolumn{6}{c}{$t=$ 0.05} \\ [1pt]
  Yes & 200 & TPR & 13.0 (2.8) & 13.4 (3.4) & 13.5 (3.4) \\
  & & FPR & 5.3 (1.7) & 5.4 (1.7) & 5.7 (1.8) \\ [1.5pt]
   & 400 & TPR & 12.7 (1.9) & 13.4 (2.7) & 13.6 (2.9) \\
   & & FPR & 5.2 (1.2) & 5.2 (1.2) & 5.4 (1.2) \\ [1.5pt]
   & 800 & TPR & 12.5 (1.3) & 13.2 (2.1) & 13.6 (2.5) \\
   & & FPR & 5.1 (0.8) & 5.2 (0.8) & 5.2 (0.9) \\ [1.5pt]
  No & 200 & TPR & 18.1 (1.0) & 18.1 (1.1) & 17.5 (2.2) \\
  & & FPR & 5.5 (1.8) & 5.5 (1.8) & 5.9 (1.8) \\ [1.5pt]
   & 400 & TPR & 18.5 (0.6) & 18.5 (0.6) & 18.2 (1.6) \\
   & & FPR & 5.1 (1.2) & 5.2 (1.2) & 5.4 (1.3) \\ [1.5pt]
   & 800 & TPR & 18.7 (0.3) & 18.7 (0.3) & 18.5 (1.1) \\
   & & FPR & 5.1 (0.9) & 5.1 (0.9) & 5.3 (0.9) \\ \hline
  \multicolumn{6}{c}{$t=$ 0.10} \\ 
  Yes & 200 & TPR & 22.1 (4.5) & 22.7 (5.3) & 23.1 (5.3) \\
  & & FPR & 10.4 (2.4) & 10.5 (2.4) & 10.8 (2.4) \\ [1.5pt]
   & 400 & TPR & 21.9 (3.6) & 22.8 (4.7) & 23.4 (4.8) \\
   & & FPR & 10.1 (1.7) & 10.2 (1.7) & 10.4 (1.8) \\ [1.5pt]
   & 800 & TPR & 21.4 (2.3) & 22.3 (3.4) & 23.3 (4.3) \\
   & & FPR & 10.1 (1.2) & 10.1 (1.2) & 10.3 (1.2) \\ [1.5pt]
  No & 200 & TPR & 29.5 (1.3) & 29.4 (1.3) & 28.8 (2.5) \\
  & & FPR & 10.3 (2.3) & 10.4 (2.3) & 10.9 (2.3) \\ [1.5pt]
   & 400 & TPR & 29.8 (0.7) & 29.8 (0.7) & 29.5 (1.5) \\
   & & FPR & 10.2 (1.7) & 10.2 (1.7) & 10.6 (1.7) \\ [1.5pt]
   & 800 & TPR & 30.1 (0.4) & 30.1 (0.4) & 29.8 (1.1) \\
   & & FPR & 10.1 (1.1) & 10.1 (1.1) & 10.3 (1.1) \\ \hline
  \multicolumn{6}{c}{$t=$ 0.20} \\
  Yes & 200 & TPR & 36.4 (6.6) & 37.2 (7.8) & 38.1 (7.4) \\
  & & FPR & 20.5 (3.2) & 20.6 (3.1) & 21.0 (3.2) \\ [1.5pt]
   & 400 & TPR & 36.2 (4.7) & 37.3 (6.2) & 38.5 (6.4) \\
   & & FPR & 20.1 (2.2) & 20.2 (2.3) & 20.4 (2.2) \\ [1.5pt]
  & 800 & TPR & 35.7 (3.0) & 37.0 (4.6) & 38.8 (5.7) \\
  & & FPR & 20.2 (1.5) & 20.2 (1.5) & 20.4 (1.5) \\ [1.5pt]
  No & 200 & TPR & 46.1 (1.7) & 46.1 (1.7) & 45.5 (2.6) \\
  & & FPR & 20.4 (3.1) & 20.5 (3.2) & 21.0 (3.2) \\ [1.5pt]
  & 400 & TPR & 46.7 (0.8) & 46.7 (0.8) & 46.4 (1.3) \\
  & & FPR & 20.1 (2.1) & 20.2 (2.1) & 20.5 (2.1) \\ [1.5pt]
  & 800 & TPR & 47.0 (0.4) & 47.0 (0.4) & 46.8 (0.7) \\
  & & FPR & 20.0 (1.6) & 20.0 (1.6) & 20.2 (1.6) \\ \hline
\end{tabular}
\end{center}}
\label{f1low}
\end{table}

\begin{table}
\def~{\hphantom{0}}
\small
\caption{Mean TPR and FPR and corresponding standard deviation (in parentheses) in the test data for $f(v) = f_1(v) \equiv \mbox{expit}(v) = e^v/(1+e^v)$ and $\beta_0$ = 1.75 across 1000 simulations. The TPR is based on the threshold corresponding to a FPR of $t$ in the test data whereas the FPR is based on the thresholds estimated in the training data. $n$, size of the training dataset; $t$, acceptable FPR; TPR, true positive rate; FPR, false positive rate; GLM, standard logistic regression; rGLM, robust logistic regression; sTPR, proposed method. All numbers are percentages. }{
\begin{center}
\begin{tabular}{ccc@{\hskip 0.2in}ccc}
\\ \hline
Outliers & $n$ & Measure & \multicolumn{3}{c}{Method} \\ \cline{4-6} \\ [-8pt]
& & & GLM & rGLM & sTPR \\ \hline 
\multicolumn{6}{c}{$t=$ 0.05} \\ [1pt]
Yes & 200 & TPR & 8.4 (1.2) & 8.4 (1.4) & 8.2 (1.8) \\
& & FPR & 7.3 (4.0) & 6.9 (3.9) & 7.7 (4.6) \\ [1.5pt]
   & 400 & TPR & 8.6 (0.9) & 8.5 (1.1) & 8.3 (1.6) \\
   & & FPR & 6.3 (2.7) & 6.3 (2.7) & 6.7 (2.9) \\ [1.5pt]
   & 800 & TPR & 8.7 (0.6) & 8.6 (0.7) & 8.5 (1.5) \\
   & & FPR & 5.8 (1.8) & 5.8 (1.8) & 6.1 (2.0) \\ [1.5pt]
  No & 200 & TPR & 18.7 (1.0) & 18.7 (1.0) & 17.2 (3.5) \\
  & & FPR & 6.3 (4.1) & 6.1 (4.0) & 7.4 (4.5) \\ [1.5pt]
   & 400 & TPR & 19.0 (0.5) & 19.0 (0.6) & 17.9 (2.9) \\
   & & FPR & 5.7 (2.7) & 5.6 (2.7) & 6.4 (3.0) \\ [1.5pt]
   & 800 & TPR & 19.2 (0.3) & 19.2 (0.3) & 18.3 (2.9) \\
   & & FPR & 5.3 (1.9) & 5.3 (1.9) & 5.9 (2.0) \\ \hline
  \multicolumn{6}{c}{$t=$ 0.10} \\
  Yes & 200 & TPR & 18.6 (3.9) & 19.1 (4.7) & 19.4 (4.8) \\
  & & FPR & 12.4 (5.1) & 12.4 (5.0) & 13.5 (5.4) \\ [1.5pt]
   & 400 & TPR & 18.6 (2.5) & 19.2 (3.5) & 19.8 (3.8) \\
   & & FPR & 11.1 (3.4) & 11.1 (3.5) & 12.0 (3.7) \\ [1.5pt]
   & 800 & TPR & 18.4 (1.4) & 19.2 (2.6) & 19.8 (3.4) \\
   & & FPR & 10.8 (2.6) & 10.8 (2.6) & 11.3 (2.7) \\ [1.5pt]
  No & 200 & TPR & 29.9 (1.3) & 29.9 (1.3) & 28.7 (3.6) \\
  & & FPR & 11.7 (5.2) & 11.5 (5.2) & 13.1 (5.6) \\ [1.5pt]
   & 400 & TPR & 30.4 (0.6) & 30.3 (0.7) & 29.4 (3.4) \\
   & & FPR & 10.7 (3.6) & 10.6 (3.6) & 11.7 (3.8) \\ [1.5pt]
   & 800 & TPR & 30.6 (0.3) & 30.6 (0.3) & 30.2 (2.0) \\
   & & FPR & 10.4 (2.5) & 10.4 (2.5) & 11.1 (2.5) \\ \hline
  \multicolumn{6}{c}{$t=$ 0.20} \\
  Yes & 200 & TPR & 34.2 (6.4) & 34.9 (7.7) & 35.9 (7.1) \\
  & & FPR & 22.5 (6.5) & 22.7 (6.3) & 24.0 (6.7) \\ [1.5pt]
   & 400 & TPR & 34.2 (4.3) & 35.0 (5.6) & 36.3 (5.9) \\
   & & FPR & 21.4 (4.7) & 21.5 (4.7) & 22.4 (4.8) \\ [1.5pt]
   & 800 & TPR & 33.9 (2.8) & 35.0 (4.4) & 36.2 (5.0) \\
   & & FPR & 20.6 (3.3) & 20.7 (3.3) & 21.3 (3.4) \\ [1.5pt]
  No & 200 & TPR & 46.4 (1.6) & 46.4 (1.6) & 45.6 (3.3) \\
  & & FPR & 22.2 (7.0) & 22.0 (7.0) & 23.9 (7.1) \\ [1.5pt]
   & 400 & TPR & 47.0 (0.8) & 47.0 (0.8) & 46.5 (2.2) \\
   & & FPR & 20.8 (5.0) & 20.7 (4.9) & 22.0 (5.0) \\ [1.5pt]
   & 800 & TPR & 47.2 (0.4) & 47.2 (0.4) & 46.9 (1.9) \\
   & & FPR & 20.6 (3.4) & 20.6 (3.4) & 21.4 (3.5) \\ \hline
\end{tabular}
\end{center}}
\label{f1high}
\end{table}

\begin{table}
\def~{\hphantom{0}}
\small
\caption{Mean TPR and FPR and corresponding standard deviation (in parentheses) in the test data for $f(v) = f_2(v) \equiv 1(v < 0)\times (1/(1+e^{-v/3})) + 1(v \geq 0)\times (1/(1+e^{-3v}))$ and $\beta_0$ = $-$5.25 across 1000 simulations. The TPR is based on the threshold corresponding to a FPR of $t$ in the test data whereas the FPR is based on the thresholds estimated in the training data. $n$, size of the training dataset; $t$, acceptable FPR; TPR, true positive rate; FPR, false positive rate; GLM, standard logistic regression; rGLM, robust logistic regression; sTPR, proposed method. All numbers are percentages. }{
\begin{center}
\begin{tabular}{ccc@{\hskip 0.2in}ccc}
\\ \hline
Outliers & $n$ & Measure & \multicolumn{3}{c}{Method} \\ \cline{4-6} \\ [-8pt]
& & & GLM & rGLM & sTPR \\ \hline 
\multicolumn{6}{c}{$t=$ 0.05} \\ [1pt]
Yes & 200 & TPR & 7.1 (1.1) & 7.1 (1.1) & 7.1 (1.1) \\
& & FPR & 5.7 (1.8) & 5.7 (1.8) & 5.9 (1.9) \\ [1.5pt]
   & 400 & TPR & 7.4 (1.0) & 7.3 (0.9) & 7.3 (1.0) \\
   & & FPR & 5.3 (1.2) & 5.4 (1.2) & 5.5 (1.2) \\ [1.5pt]
   & 800 & TPR & 7.6 (0.8) & 7.5 (0.8) & 7.5 (0.9) \\
   & & FPR & 5.1 (0.8) & 5.2 (0.8) & 5.2 (0.9) \\ [1.5pt]
  No & 200 & TPR & 7.3 (1.4) & 7.3 (1.4) & 7.2 (1.4) \\
  & & FPR & 5.5 (1.7) & 5.6 (1.7) & 5.9 (1.8) \\ [1.5pt]
   & 400 & TPR & 7.8 (0.9) & 7.8 (1.0) & 7.7 (1.1) \\
   & & FPR & 5.2 (1.2) & 5.2 (1.1) & 5.4 (1.2) \\ [1.5pt]
   & 800 & TPR & 8.1 (0.4) & 8.1 (0.4) & 8.0 (0.7) \\
   & & FPR & 5.1 (0.9) & 5.1 (0.9) & 5.2 (0.9) \\ \hline
  \multicolumn{6}{c}{$t=$ 0.10} \\
  Yes & 200 & TPR & 12.4 (2.0) & 12.3 (2.0) & 12.4 (2.0) \\
  & & FPR & 10.6 (2.3) & 10.6 (2.3) & 10.9 (2.4) \\ [1.5pt]
   & 400 & TPR & 12.6 (1.7) & 12.4 (1.7) & 12.6 (1.8) \\
   & & FPR & 10.4 (1.7) & 10.4 (1.7) & 10.6 (1.7) \\ [1.5pt]
   & 800 & TPR & 12.8 (1.5) & 12.5 (1.5) & 12.7 (1.6) \\
   & & FPR & 10.2 (1.2) & 10.2 (1.1) & 10.3 (1.2) \\ [1.5pt]
  No & 200 & TPR & 13.9 (2.2) & 13.9 (2.2) & 13.6 (2.3) \\
  & & FPR & 10.7 (2.3) & 10.8 (2.3) & 11.2 (2.4) \\ [1.5pt]
   & 400 & TPR & 14.5 (1.5) & 14.5 (1.5) & 14.4 (1.6) \\
   & & FPR & 10.2 (1.6) & 10.2 (1.6) & 10.5 (1.6) \\ [1.5pt]
   & 800 & TPR & 15.0 (0.8) & 15.0 (0.8) & 14.9 (1.0) \\
   & & FPR & 10.2 (1.2) & 10.2 (1.2) & 10.4 (1.2) \\ \hline
  \multicolumn{6}{c}{$t=$ 0.20} \\
  Yes & 200 & TPR & 22.4 (3.6) & 22.2 (3.7) & 22.5 (3.7) \\
  & & FPR & 20.9 (3.1) & 20.9 (3.2) & 21.3 (3.2) \\ [1.5pt]
   & 400 & TPR & 22.6 (3.3) & 22.3 (3.3) & 22.7 (3.3) \\
   & & FPR & 20.6 (2.2) & 20.6 (2.2) & 20.9 (2.2) \\ [1.5pt]
   & 800 & TPR & 22.8 (2.7) & 22.3 (2.8) & 22.8 (2.8) \\
   & & FPR & 20.2 (1.5) & 20.2 (1.6) & 20.4 (1.6) \\ [1.5pt]
  No & 200 & TPR & 25.8 (3.5) & 25.7 (3.5) & 25.5 (3.6) \\
  & & FPR & 20.9 (3.1) & 20.9 (3.1) & 21.4 (3.1) \\ [1.5pt]
   & 400 & TPR & 26.9 (2.3) & 26.9 (2.3) & 26.8 (2.3) \\
   & & FPR & 20.5 (2.1) & 20.5 (2.1) & 20.8 (2.1) \\ [1.5pt]
   & 800 & TPR & 27.7 (1.1) & 27.7 (1.1) & 27.5 (1.3) \\
   & & FPR & 20.3 (1.6) & 20.3 (1.6) & 20.5 (1.6) \\ \hline
\end{tabular}
\end{center}}
\label{f2low}
\end{table}

\begin{table}
\def~{\hphantom{0}}
\small
\caption{Mean TPR and FPR and corresponding standard deviation (in parentheses) in the test data for $f(v) = f_2(v) \equiv 1(v < 0)\times (1/(1+e^{-v/3})) + 1(v \geq 0)\times (1/(1+e^{-3v}))$ and $\beta_0$ = 0.6 across 1000 simulations. The TPR is based on the threshold corresponding to a FPR of $t$ in the test data whereas the FPR is based on the thresholds estimated in the training data. $n$, size of the training dataset; $t$, acceptable FPR; TPR, true positive rate; FPR, false positive rate; GLM, standard logistic regression; rGLM, robust logistic regression; sTPR, proposed method. All numbers are percentages. }{
\begin{center}
\begin{tabular}{ccc@{\hskip 0.2in}ccc}
\\ \hline
Outliers & $n$ & Measure & \multicolumn{3}{c}{Method} \\ \cline{4-6} \\ [-8pt]
& & & GLM & rGLM & sTPR \\ \hline 
\multicolumn{6}{c}{$t=$ 0.05} \\ [1pt]
Yes & 200 & TPR & 23.0 (8.6) & 30.5 (10.9) & 31.9 (10.8) \\
& & FPR & 6.4 (3.3) & 6.3 (3.4) & 6.8 (3.7) \\ [1.5pt]
  Yes & 400 & TPR & 21.5 (6.9) & 31.8 (10.5) & 33.5 (10.1) \\
  & & FPR & 5.8 (2.3) & 5.8 (2.4) & 6.2 (2.6) \\ [1.5pt]
  Yes & 800 & TPR & 20.0 (4.4) & 34.6 (9.2) & 35.8 (8.5) \\
  & & FPR & 5.4 (1.6) & 5.3 (1.6) & 5.7 (1.7) \\ [1.5pt]
  No & 200 & TPR & 49.7 (1.5) & 49.5 (1.7) & 48.6 (4.4) \\
  & & FPR & 6.0 (3.5) & 5.9 (3.5) & 6.8 (3.7) \\ [1.5pt]
  No & 400 & TPR & 50.3 (0.7) & 50.1 (0.8) & 49.7 (2.3) \\
  & & FPR & 5.5 (2.5) & 5.4 (2.5) & 6.1 (2.6) \\ [1.5pt]
  No & 800 & TPR & 50.5 (0.4) & 50.5 (0.5) & 50.1 (2.0) \\
  & & FPR & 5.2 (1.6) & 5.2 (1.6) & 5.6 (1.7) \\ \hline
  \multicolumn{6}{c}{$t=$ 0.10} \\
  Yes & 200 & TPR & 37.3 (11.0) & 45.7 (13.7) & 48.4 (13.2) \\
  & & FPR & 11.5 (4.5) & 11.6 (4.5) & 12.4 (4.6) \\ [1.5pt]
  Yes & 400 & TPR & 35.2 (8.5) & 47.5 (12.9) & 50.6 (12.1) \\
  & & FPR & 10.8 (3.1) & 10.9 (3.2) & 11.4 (3.3) \\ [1.5pt]
  Yes & 800 & TPR & 34.5 (6.6) & 51.3 (10.7) & 53.6 (10.2) \\
  & & FPR & 10.4 (2.2) & 10.4 (2.2) & 10.8 (2.3) \\ [1.5pt]
  No & 200 & TPR & 61.3 (1.4) & 61.1 (1.6) & 60.7 (3.2) \\
  & & FPR & 10.9 (4.5) & 10.9 (4.5) & 12.1 (4.7) \\ [1.5pt]
  No & 400 & TPR & 61.8 (0.7) & 61.6 (0.8) & 61.4 (1.2) \\
  & & FPR & 10.6 (3.2) & 10.6 (3.2) & 11.4 (3.3) \\ [1.5pt]
  No & 800 & TPR & 62.0 (0.4) & 62.0 (0.4) & 61.8 (0.8) \\
  & & FPR & 10.3 (2.3) & 10.3 (2.3) & 10.9 (2.4) \\ \hline
  \multicolumn{6}{c}{$t=$ 0.20} \\
  Yes & 200 & TPR & 53.2 (10.6) & 60.9 (13.0) & 64.2 (12.3) \\
  & & FPR & 21.2 (5.9) & 21.8 (6.0) & 22.8 (6.0) \\ [1.5pt]
  Yes & 400 & TPR & 52.0 (8.5) & 63.5 (11.8) & 65.4 (11.3) \\
  & & FPR & 20.7 (4.1) & 21.1 (4.2) & 21.7 (4.1) \\ [1.5pt]
  Yes & 800 & TPR & 51.1 (6.0) & 66.3 (9.7) & 68.6 (8.2) \\
  & & FPR & 20.4 (3.0) & 20.6 (3.0) & 21.1 (3.0) \\ [1.5pt]
  No & 200 & TPR & 73.3 (1.1) & 73.1 (1.3) & 73.0 (1.5) \\
  & & FPR & 21.4 (6.4) & 21.4 (6.4) & 22.5 (6.3) \\ [1.5pt]
  No & 400 & TPR & 73.6 (0.6) & 73.5 (0.7) & 73.5 (0.8) \\
  & & FPR & 20.7 (4.4) & 20.7 (4.4) & 21.6 (4.4) \\ [1.5pt]
  No & 800 & TPR & 73.8 (0.3) & 73.8 (0.4) & 73.8 (0.4) \\
  & & FPR & 20.4 (3.0) & 20.4 (3.0) & 21.0 (3.0) \\ \hline 
\end{tabular}
\end{center}}
\label{f2high}
\end{table}

\clearpage

\section*{Appendix D: Illustration of data simulated with outliers}

\begin{figure}[ht!]
\begin{center}
\includegraphics[scale=0.43]{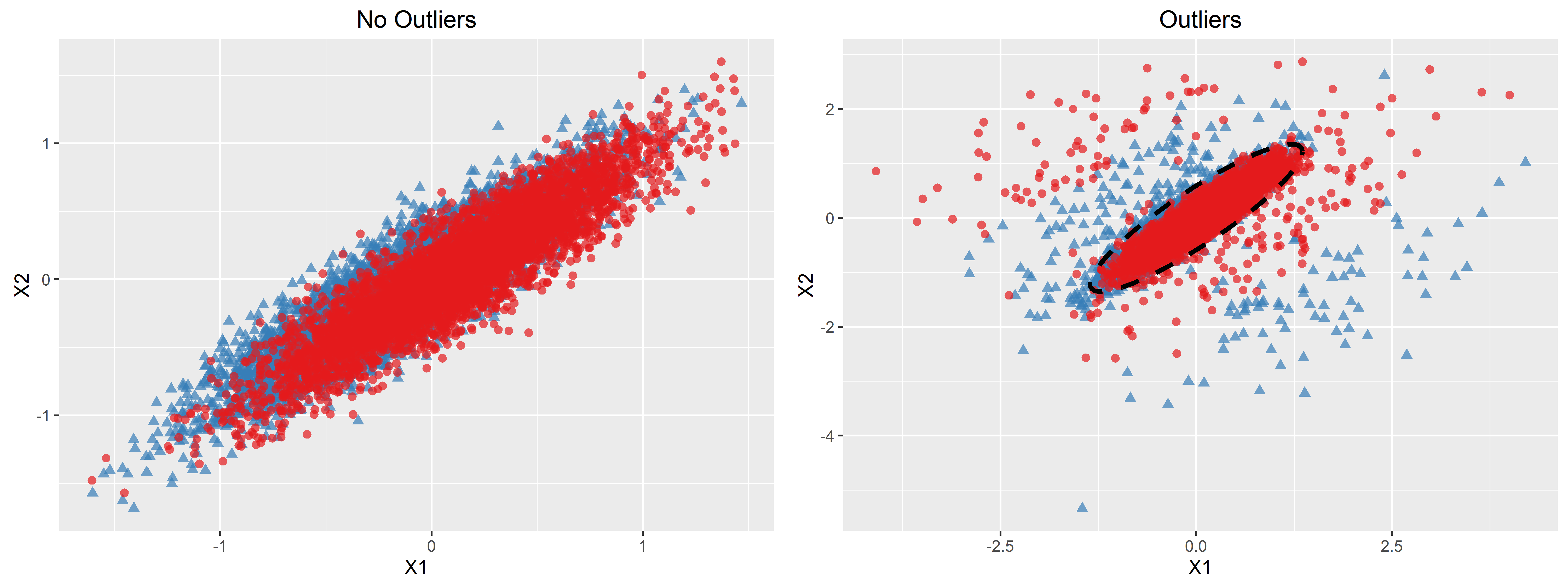}
\captionsetup{labelformat=empty}
\caption{\textbf{Figure:} Datasets with $f(v) = f_1(v) \equiv \mbox{expit}(v)$, $\beta_0 = 0$, without (left plot) and with outliers (right plot). Cases are represented by red circles, and controls are represented by blue triangles. The plot with outliers also includes an ellipse (dashed black line) indicating the 99\% confidence region for the distribution of $(X_1, X_2)$ without outliers.}
\label{OutlierPlot}
\end{center}
\end{figure}

\clearpage

\section*{Appendix E: Biomarker distribution in Pima Indian diabetes dataset}

\begin{figure}[ht!]
\begin{center}
\includegraphics[scale=0.7]{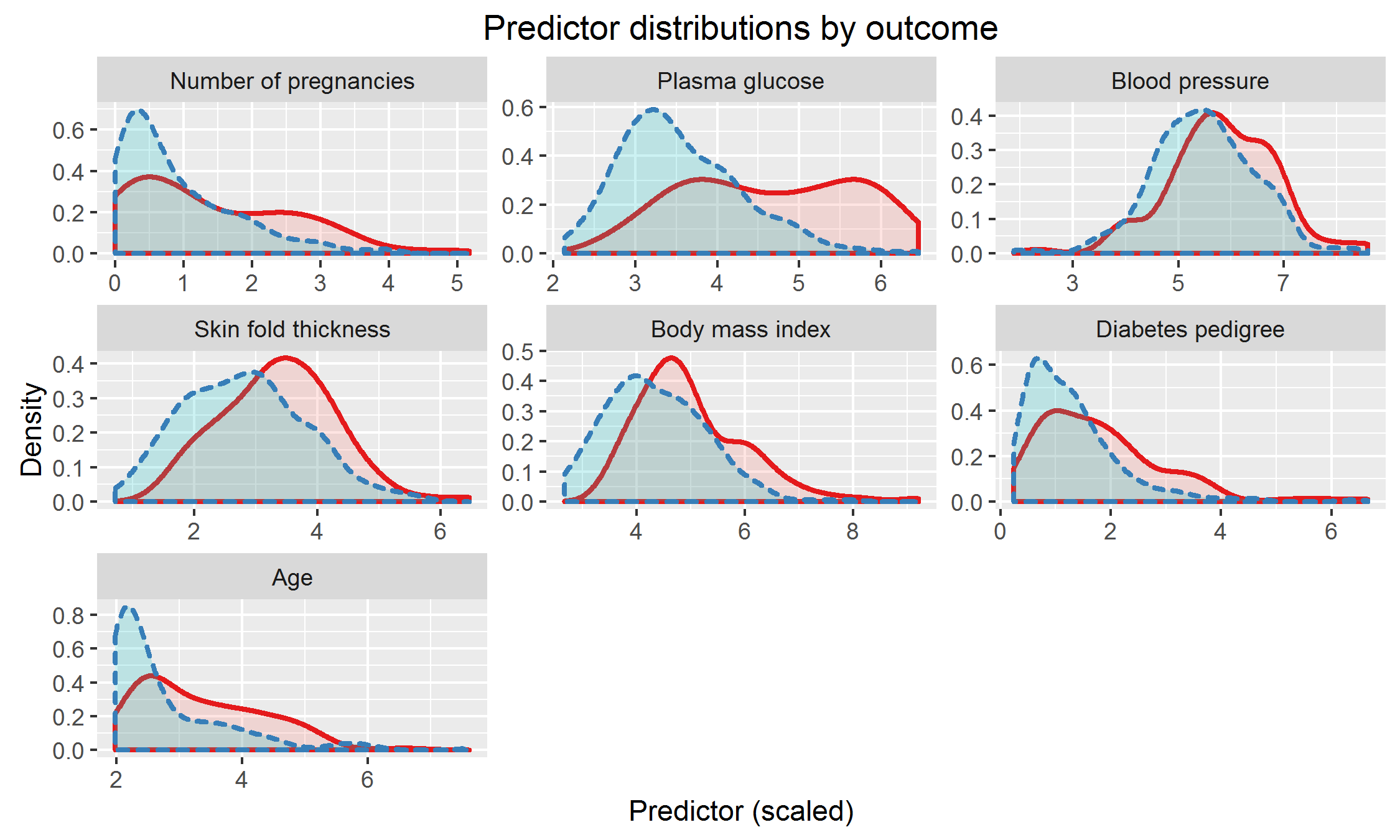}
\captionsetup{labelformat=empty}
\caption{\textbf{Figure:} Stratified distributions of the scaled predictors measured in the diabetes study for the observations in the training data. The predictors are number of pregnancies, plasma glucose concentration, diastolic blood pressure, triceps skin fold thickness, body mass index, diabetes pedigree function, and age. The predictor values are shown on the x-axis of each plot. The red solid line represents the distribution among diabetes cases and the blue dotted line represents the distribution among controls.}
\label{diabmarkers}
\end{center}
\end{figure}

\clearpage

\end{document}